\providecommand{\U}[1]{\protect\rule{.1in}{.1in}}
\providecommand{\U}[1]{\protect\rule{.1in}{.1in}}
\newtheorem{teo}{Theorem}
\theoremstyle{plain}
\newtheorem{acknowledgement}{Acknowledgement}
\newtheorem{corollary}{Corollary}
\newtheorem{lem}{Lemma}
\newtheorem{proposition}{Proposition}
\newtheorem{remark}{Remark}
\numberwithin{equation}{section}
\DeclareMathOperator{\senh}{senh}
\DeclareMathOperator{\csch}{csch}
\DeclareMathOperator{\sen}{sen}
\DeclareMathOperator{\sd}{sd}
\DeclareMathOperator{\sech}{sech}
\DeclareMathOperator{\erf}{erf}
\begin{document}
\title[variable coefficient reaction-diffusion equations]{ Riccati-Ermakov
systems and explicit solutions for variable coefficient reaction-diffusion
equations}
\author[E. Pereira]{Enrique Pereira}
\address[E. Pereira]{Department of Mathematical Sciences, Southern Methodist
University and Institute of Applied Mathematics, University of Cartagena .}
\email{epereirabatista@mail.smu.edu}
\author[E. Suazo]{Erwin Suazo}
\address[E. Suazo]{School of Mathematical and Statistical Sciences,
University of Texas, Rio Grande Valley, 1201 West University Dr. Edinburg,
Texas 78539-2999}
\email{erwin.suazo@utrgv.edu}
\author[J. Trespalacios]{Jessica Trespalacios}
\address[J. Trespalacios]{Department of Mathematical Sciences, Universidad
Santiago de Cali, Santiago de Cali, Colombia.}
\email{jessica.trespalacios00@usc.edu.co}
\date{\today }

\begin{abstract}
We present several families of nonlinear reaction diffusion equations with
variable coefficients including generalizations of Fisher-KPP and Burgers
type equations. Special exact solutions such as traveling wave, rational,
triangular wave and N-wave type solutions are shown. By means of similarity
transformations the variable coefficients are conditioned to satisfy Riccati
or Ermakov systems of equations. When the Riccati system is used, conditions
are established so that finite-time singularities might occur. We explore
solution dynamics across multi-parameters. In the suplementary material, we
provide a computer algebra verification of the solutions and exemplify
nontrivial dynamics of the solutions.

\noindent{\footnotesize {\textbf{Keywords and Phrases}. \textit{Similarity
tranformations, variable coefficient Burgers equation, variable coefficient
Fisher-KPP equation, Riccati-Ermakov systems of ODEs, exact solutions,
multiparameters.}}\newline
} 
\end{abstract}

\maketitle


\section{Introduction}

Most physical and biological systems are not homogeneous, in part due to
fluctuations in environmental conditions and the presence of nonuniform
media. Therefore, most of the nonlinear equations with real applications
possess coefficients varying spatially and/or temporally. Reaction-diffusion
equations play a fundamental role in a large number of models of heat
diffusion and reaction processes in nonlinear acoustics \cite{Crighton},
biology, chemistry, genetics and many other areas of research \cite{Abl}, 
\cite{Agr}, \cite{Caze:her}, \cite{Debnath05}, \cite{Polyanin12}, \cite%
{Fisher37} and \cite{Brezis86}. In this paper we obtain explicit solutions
for the general initial value problems of a generalized Fisher-KPP with
variable coefficients given by {\small {%
\begin{equation}
\dfrac{\partial u}{\partial t}=a(t)\dfrac{\partial ^{2}u}{\partial x^{2}}%
-(g(t)-c(t)x)\dfrac{\partial u}{\partial x}%
+(d(t)+L(t)+M(t)x-B(t)x^{2})u+h(x,t)|u|^{p}u,  \tag{GNLH}
\end{equation}%
}}and a generalized Burgers equation (GBE) with variable coefficients 
\begin{equation}
v_{t}+4a(t)(vv_{x}-v_{xx})=-b(t)x+f(t),  \tag{GBE}  \label{GBE0}
\end{equation}%
{\small \ }where $a(t),$ $b(t),$ $B(t)$, $c(t),$ $d(t),$ $M(t),$ $L(t),$ $%
f(t)$ and $g(t)$ are suitable functions that depend only on the time
variable and $h(x,t)$ depends on $t>0,$ $x\in 
\mathbb{R}
$ and $p>0$. If in \ref{GNLH} we choose $d(t)$ constant $L(t)=h(x,t)=1$ and $%
g(t)=c(t)=M(t)=B(t)=0,$ we obtain the Fisher-KPP equation \cite{Briton}, 
\cite{Fisher37}, \cite{Kolmogorov37}, \cite{Fife} and \cite{Murray}; if we
take the same coefficients with $d(t)=0,$ we obtain the nonlinear heat
equation with absorption. If in \ref{GBE0} we consider $a(t)=1$ and $b(t)$=$%
f(t)=0,$ we obtain the standard Burgers equation \cite{Burgers1}, \cite%
{Burgers2}, \cite{Cole}, \cite{Enflo} and \cite{Hopf}.

The Riccati equations have played an important role in explicit solutions
for Fisher and Burgers equations (see \cite{Feng}, \cite{Kudray} \ and
references therein). In this paper, in order obtain the main results, we use
a fundamental approach consisting of the use of similarity transformations
and the solutions of Riccati-Ermakov systems with several parameters for the
diffusion case \cite{Suazo:Sus:Ve} and \cite{Suazo:Sus:Ve2}. Of course,
similarity transformations have been extensively applied thanks to Lie
groups and Lie algebras \cite{Bluman}, \cite{Bluman2}, \cite{bluman3} and 
\cite{Olver}. The consideration of parameters in this work is inspired by
the work of E. Marhic, who, in 1978 \cite{Marhic78} introduced (probably for
the first time) a one-parameter $\left\{ \alpha (0)\right\} $ family of
solutions for the linear Schrödinger equation of the one-dimensional
harmonic oscillator. The solutions presented by E. Marhic constituted a
generalization of the original Schrödinger wave packet with oscillating
width. Ermakov systems (for the dispersive case) with solutions containing
parameters \cite{Lan:Lop:Sus} have been used successfully to construct
solutions for the generalized harmonic oscillator with a hidden symmetry 
\cite{Lo:Su:VeSy}, \cite{Lop:Sus:VegaGroup}, and they have also been used to
present Galilei transformation, pseudoconformal transformation and others in
a unified manner, see \cite{Lop:Sus:VegaGroup}. More recently they have been
used in \cite{Mah:Su:Sus} to show spiral and breathing solutions and
solutions with bending for the paraxial wave equation. One of the main
results of this paper is to use similar techniques to provide solutions with
parameters providing a control on the dynamics of solutions (see Figures 1,
2 and 3) for reaction diffusion equations of the form (\ref{GNLH}) and (\ref%
{GBE0}). To this end, it is necessary to establish the conditions on the
coefficients of the differential equations to satisfy Riccati-Ermakov
systems for the diffusion case, which has different solutions compared to
the dispersive case. Once the transformations are established for standard
models such as the Fisher equation or the KPP-equation, explicit global
solutions proposed by Clarkson \cite{Clarkson93} can be used; these
solutions include Jacobi elliptic functions and exponential and rational
functions, see Table 1 for an extended list of examples. Transformations to
the Burgers equation will allow us to produce special exact solutions such
as rational, triangular wave and N-wave type solutions. \ These results
would help to test numerical methods in the study of numerical solutions and
dynamics of singularities. A similar study for nonlinear Schrödinger
equations with variable coefficients can be found in \cite{Escorcia}.

Finite time blow-up for the nonlinear heat equation \cite%
{FilippasHerrero2000}-\cite{HerreroVela1993}, \cite{Kamin85}, \cite{Matano78}%
, \cite{Matos99unfocused}\ and pole dynamics for the Burgers equation have
been studied extensively \cite{calogero}, \cite{chonodnovski}, \cite%
{Deconick:kimura}. We will provide solutions with singularities, which
should motivate further research in the dynamics of singularities.

In general, the variable coefficients Burgers equation (vcBE) is not
integrable, and there are not many exactly solvable models known for vcBE 
\cite{Rosa}-\cite{Scott}. For an interesting application on nonlinear
magnetosonic waves propagating perpendicular to a magnetic neutral sheet,
see \cite{Sakai}. Examples of exact solutions include BE with time-dependent
forcing \cite{Orlo} and with elastic forcing terms \cite{Moreau} and \cite%
{Eule:Friedrich}. Previous methods to obtain explicit solutions include
Green's functions \cite{Zola}, transformations \cite{Sopho} and Cole
transformations \cite{Xu}. In this work we generalize the transformation
presented in \cite{Eule:Friedrich} where Langevin equations and the Hill
equation were used to express the transformation. Instead, we will use what
we have called the Riccati system; further, our solutions will show \textsl{%
multiparameters. }Table 2 shows several examples of families with explicit
solutions and mutiparameters.

This paper is organized in the following manner: In Section 2, we present
Lemma 1 where conditions on the coefficients are established for equation %
\ref{GNLH} to be transformed into the standard Fisher-KPP equation through a
similarity transformation. Further, conditions are given to obtain solutions
with singularities or to avoid them. Examples of equations constructed using
Ermakov systems with solutions without singularities are also explained. \
In Section 3, an alternative approach to solve Riccati systems explicitly is
presented. This approach allows us to present abundant families generalizing
Fisher-KPP equation (see Table 1 with $a=0$) presenting explicit solutions.
Table 2 presents examples of equations with singularities. Conditions for
the existence of solutions are presented. In Section 4, we will study
explicit solutions with multiparameters for the\ variable coefficient
Burgers equation \ref{GBE0}. Several examples with explicit solutions are
presented in Tables 3 and 4. As an applications of the multiparamters
approach we present a new symmetry for burgers equation. We also provide an
appendix where we recall solutions of Riccati-Ermakov systems (for the
diffusion case) previously published \cite{Suazo:Sus:Ve}. Finally, we
provide a supplementary file where our solutions are verified.

\section{Riccati-Ermakov System and Similarity Transformation for Variable
Coefficient Reaction-Diffusion Equations}

In Lemma 1, we show \ref{GNLH} can be transformed into the standard
Fisher-KPP equation through a similarity transformation and as an
application of the multiparameter solution for Riccati-Ermakov systems from
the appendix. Conditions are given to obtain solutions with singularities or
to avoid them. Examples of equations constructed using Ermakov systems with
solutions without singularities are also explained.

\begin{lem}
Let $a(t)>0,$ $\forall t>0$ and $b,c,d,f,g$ suitable time dependent
functions. The variable coefficient nonlinear reaction-diffusion equation 
\begin{align}
\frac{\partial u}{\partial t}=a& \left( t\right) \frac{\partial ^{2}u}{%
\partial x^{2}}-\left( g\left( t\right) -c\left( t\right) x\right) \frac{%
\partial u}{\partial x}+  \notag  \label{GNLH} \\
& \left( d\left( t\right) +L(t)+M(t)x-B(t)x^{2}+h(x,t)u^{p}\right) u 
\tag{GNLH}
\end{align}%
can be reduced to the Fisher-KPP equation ($p>0,$ $x\in 
\mathbb{R}
,$ $t>0$ and $r_{0}$ and $h_{0}$ real constants) 
\begin{equation}
\frac{\partial v}{\partial \tau }=\frac{\partial ^{2}v}{\partial \xi ^{2}}+v(%
{r}_{0}+{h}_{0}v^{p})  \label{ecuacionmodelo}
\end{equation}%
through the similarity transformation 
\begin{align}
u\left( x,t\right) & =\frac{1}{\sqrt{\mu \left( t\right) }}e^{\alpha \left(
t\right) x^{2}+\delta \left( t\right) x+\kappa \left( t\right) }v\left( \xi
,\tau \right) ,  \label{substitution1} \\
\xi & =\beta \left( t\right) x+\varepsilon \left( t\right) ,\quad \tau
=\gamma \left( t\right) ,  \notag
\end{align}%
where\newline
1. $\mu $ satisfies the Ermakov equation (with $c_{0}\in \left\{ 0,1\right\} 
$) 
\begin{equation}
\mu ^{\prime \prime }-\left( \frac{a^{\prime }}{a}+2c-4d\right) \mu ^{\prime
}-4\left( ab+cd+d^{2}-\frac{d}{2}\left( \frac{a^{\prime }}{a}-\frac{%
d^{\prime }}{d}\right) \right) \mu =c_{0}\frac{4a^{2}\exp \left(
4\int_{0}^{t}\left( c\left( s\right) -2d\left( s\right) \right) \ ds\right) 
}{\mu ^{3}},  \label{Ermakov-P}
\end{equation}%
2. $\alpha (t)$, $\beta (t)$, $\gamma (t)$, $\delta (t)$, $\varepsilon (t)$
and $\kappa (t)$ satisfy the Riccati-Ermarkov's system (\ref{Ermakov01})-(%
\ref{Ermakov06}),\newline
3. The following balance between the coefficients holds 
\begin{align}
h(x,t)& =h_{0}a(t)\beta ^{2}(t)\mu ^{p/2}(t)e^{-p(\alpha (t)x^{2}+\delta
(t)x+\kappa (t))},  \label{Conditions 1} \\
B(t)& =b(t)-c_{0}a(t)\beta ^{4}(t), \\
M(t)& =f(t)+2c_{0}a(t)\beta ^{3}(t)\varepsilon (t), \\
L(t)& =a(t)\beta ^{2}(t)(c_{0}\varepsilon ^{2}(t)+r_{0}).
\label{Conditions 4}
\end{align}
\end{lem}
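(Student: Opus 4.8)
The plan is to substitute the ansatz \eqref{substitution1} directly into \eqref{GNLH} and collect coefficients of the resulting expression in powers of $v$ and its derivatives, forcing everything to reduce to \eqref{ecuacionmodelo}. First I would compute the partial derivatives $u_t$, $u_x$, $u_{xx}$ in terms of $v$, $v_\xi$, $v_\tau$, $v_{\xi\xi}$, remembering that $\xi=\beta(t)x+\varepsilon(t)$ and $\tau=\gamma(t)$, so that $v_\xi$ acquires a factor $\beta$ from $\xi_x$ and $v_t$ contributes both $\gamma' v_\tau$ and $(\beta' x+\varepsilon')v_\xi$. After dividing through by the common factor $\mu^{-1/2}e^{\alpha x^2+\delta x+\kappa}$, I would organize the identity as: (i) the coefficient of $v_{\xi\xi}$ must be $a\beta^2$, which after rescaling time by $\gamma'=a\beta^2$ becomes the desired $\partial^2_\xi$ term; (ii) the coefficient of $v_\xi$ must vanish after absorbing $\gamma'$, which yields the ODEs for $\beta$ and $\varepsilon$ (i.e. the equations for $\beta'$ involving $\alpha$, and for $\varepsilon'$ involving $\delta$, $g$); (iii) the coefficient of $v$ that is not of the $v^{p+1}$ type must equal $\gamma' r_0 = a\beta^2 r_0$ plus the exponential/$\mu$ contributions, giving the ODEs for $\alpha$, $\delta$, $\kappa$ and $\mu$; (iv) the nonlinear term $h(x,t)u^p u$ must match $a\beta^2 h_0 v^{p+1}$, which forces the balance condition \eqref{Conditions 1} on $h(x,t)$.

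The key computational step is recognizing which combination of the collected coefficients gives the Riccati system \eqref{Ermakov01}--\eqref{Ermakov06}. I expect the coefficient of $x^2 v$ to give a Riccati equation for $\alpha$ of the form $\alpha' + 4a\alpha^2 + \cdots = B(t) + \cdots$ (with the $2c\alpha$ and $-d$-type linear terms coming from the drift $(g-cx)u_x$ and the potential term); the coefficient of $xv$ gives a linear (in $\delta$) equation coupling $\delta$ and $\varepsilon$; and the constant-in-$x$ part of the $v$ coefficient gives the equation for $\kappa$. The standard trick—due to the substitution literature cited (\cite{Suazo:Sus:Ve}, \cite{Suazo:Sus:Ve2})—is to linearize the Riccati equation for $\alpha$ via $\alpha = -\mu'/(4a\mu)$ (up to the appropriate sign and the $d$-shift), which converts the $\alpha$-Riccati into the second-order linear/Ermakov equation \eqref{Ermakov-P} for $\mu$. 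The parameter $c_0\in\{0,1\}$ distinguishes whether we keep the Ermakov nonlinearity ($c_0=1$, which is what couples $\beta$, $\varepsilon$ into the relations for $B$, $M$, $L$ via the $\beta^4$, $\beta^3\varepsilon$, $\varepsilon^2$ terms) or drop it ($c_0=0$, pure linear Riccati).

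I would then verify that with $\gamma' = a\beta^2$, the equation for $\beta$ reads $\beta'/\beta = -2a\alpha - c$ (or similar), the equation for $\varepsilon$ reads $\varepsilon' = (2a\alpha+c)(-\varepsilon) \cdot(\text{something}) - \beta(\ldots) + \beta g$, and the remaining equations close up into exactly the system in the appendix. Finally, matching the $x^2$, $x^1$, $x^0$ pieces of the linear-in-$v$ coefficient against the data $B(t)$, $M(t)$, $L(t)$, $d(t)$ produces precisely \eqref{Conditions 1}--\eqref{Conditions 4}: the $-c_0 a\beta^4$ correction to $B$, the $+2c_0 a\beta^3\varepsilon$ correction to $M$, and the $L = a\beta^2(c_0\varepsilon^2+r_0)$ identity all emerge from the Ermakov ($c_0$) terms plus the $r_0$ shift from the model equation.

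The main obstacle is purely organizational bookkeeping: the algebra has many terms (a drift with both $x$-independent and $x$-linear parts, a potential quadratic in $x$, and the exponential conjugation which mixes $\alpha$, $\delta$, $\kappa$ into every derivative), so the hard part is being careful with signs and with correctly attributing each $x$-power contribution when eliminating $v_\xi$ and normalizing by $\gamma'$. Once the substitution is carried out cleanly and the coefficients of $\{v_{\xi\xi}, v_\xi, x^2 v, xv, v, v^{p+1}\}$ are set to their target values, the identification of the Riccati--Ermakov system and the balance conditions \eqref{Conditions 1}--\eqref{Conditions 4} is forced, and invoking the appendix's solution of that system (with the $c_0$ parameter and the explicit formulas in \eqref{Ermakov-P}) completes the argument. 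A computer-algebra check, as the authors note is provided in the supplement, is the natural safeguard against sign errors.
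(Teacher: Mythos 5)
Your proposal is correct and follows essentially the same route as the paper's proof: direct substitution of the similarity ansatz, matching the coefficients of $v_{\xi\xi}$, $v_\xi$, $x^2v$, $xv$, $v$ and $v^{p+1}$ to force the Riccati--Ermakov system \eqref{Ermakov01}--\eqref{Ermakov06} and the balance conditions \eqref{Conditions 1}--\eqref{Conditions 4}, and then linearizing the $\alpha$-Riccati via the $\mu$-substitution (the paper's sign convention is $\alpha=\frac{1}{4a}\frac{\mu'}{\mu}-\frac{d}{2a}$, consistent with your ``up to the appropriate sign and $d$-shift'' caveat). No gap.
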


\begin{proof}[\textbf{Proof.}]
To start the proof of the Lemma, let's define $S(x,t)=\alpha (t)x^{2}+\delta
(t)x+k(t).$ Then 
\begin{equation*}
S_{t}(x,t)=\alpha ^{\prime }(t)x^{2}+\delta ^{\prime }x+\kappa ^{\prime
}(t),\quad \quad S_{x}(x,t)=2\alpha (t)x+\delta (t),\quad \quad
S_{xx}(x,t)=2x.
\end{equation*}%
\noindent It is easy to check from (\ref{substitution1}) that 
\begin{equation*}
\dfrac{\partial u}{\partial t}=\dfrac{e^{S(x,t)}}{\sqrt{\mu (t)}}\left[
\alpha ^{\prime }(t)x^{2}v+\delta ^{^{\prime }}(t)xv+k^{\prime }(t)v-\dfrac{%
\mu ^{\prime }(t)}{2\mu (t)}v+v_{\xi }\beta ^{\prime }(t)x+v_{\xi
}\varepsilon ^{\prime }(t)+v_{\tau }\gamma ^{\prime }(t)\right] ,
\end{equation*}%
\begin{equation*}
\dfrac{\partial u(x,t)}{\partial x}=\dfrac{e^{S(x,t)}}{\sqrt{\mu (t)}}\left[
(2\alpha (t)x+\delta (t))v(\xi ,\tau )+v_{\xi }(\xi ,\tau )\beta (t)\right] ,
\end{equation*}%
\noindent

and 
\begin{equation*}
\dfrac{\partial ^{2}u(x,t)}{\partial x^{2}}=\dfrac{e^{S(x,t)}}{\sqrt{\mu (t)}%
}\left[ (2\alpha (t)x+\delta (t))^{2}v+2\beta (t)S_{x}v_{\xi }+2\alpha
(t)v+v_{\xi \xi }\beta ^{2}(t)\right] .
\end{equation*}%
Replacing on the right side of the equation (\ref{GNLH}) and grouping we
obtain

\begin{eqnarray}
\dfrac{d\alpha (t)}{dt} &=&-b(t)+2c(t)\alpha (t)+4a(t)\alpha
^{2}(t)+c_{0}\beta ^{4}(t),  \label{Ermakov01} \\
\dfrac{d\beta (t)}{dt} &=&(c(t)+4a(t)\alpha (t))\beta (t), \\
\dfrac{d\gamma (t)}{dt} &=&a(t)\beta ^{2}(t),  \label{gamma} \\
\dfrac{d\delta (t)}{dt} &=&(c(t)+4a(t)\alpha (t))\delta (t)+f(t)-2\alpha
(t)g(t)+2c_{0}a(t)\beta ^{3}\varepsilon (t), \\
\dfrac{d\varepsilon (t)}{dt} &=&-(g(t)-2a(t)\delta (t))\beta (t), \\
\dfrac{d\kappa (t)}{dt} &=&-g(t)\delta (t)+a(t)\delta
^{2}(t)+c_{0}a(t)\varepsilon ^{2}(t)\beta ^{2}  \label{Ermakov06}
\end{eqnarray}%
\noindent with $c_{0}\in \left\{ 0,1\right\} $. This system will be refered
\ as the Riccati-Ermakov system. We will also use the standard substitution 
\begin{equation}
\alpha (t)=\dfrac{1}{4a(t)}\dfrac{\mu ^{\prime }(t)}{\mu (t)}-\dfrac{d(t)}{%
2a(t)}.
\end{equation}

Therefore, (\ref{GNLH}) becomes 
\begin{align*}
a(t)& \frac{\partial ^{2}u}{\partial x^{2}}-\left( g\left( t\right) -c\left(
t\right) x\right) \frac{\partial u}{\partial x}+\left( d\left( t\right)
+L(t)+M(t)x-B(t)x^{2}+h(x,t)|u|^{p}\right) u \\
=& \left( 4a(t)\alpha ^{2}(t)+2c(t)\alpha (t)-b(t)+c_{0}a(t)\beta
^{4}(t)\right) x^{2}v \\
& +\left( 4a(t)\alpha (t)\delta (t)-2g(t)\alpha (t)+c(t)\delta
(t)+f(t)+2c_{0}a(t)\beta ^{3}(t)\varepsilon (t)\right) xv \\
& +\left( a(t)\delta ^{2}(t)+4a(t)\beta (t)\alpha (t)-g(t)\delta
(t)+c(t)\beta (t)+c_{0}a(t)\beta ^{2}(t)\varepsilon ^{2}(t)+2a(t)\alpha
(t)+d(t)\right) v \\
& +\left( 2a(t)\beta (t)\delta (t)-g(t)\beta (t)\right) v_{\xi }+a(t)\beta
^{2}v_{\xi \xi }+r_{0}a(t)\beta ^{2}(t)v+h(x,t)e^{pS(x,t)}\mu
(t)^{-p/2}|v|^{p}v,
\end{align*}%
\noindent since by hypothesis 2. of the Lemma the functions $\alpha $, $%
\beta $, $\gamma $, $\delta $, $\varepsilon $, $\mu $, and $\kappa $ satisfy
the Riccati-Ermakov system (\ref{Ermakov01})- (\ref{Ermakov06}), and
function $v(\xi ,\tau )$ is the solution of the equation (\ref%
{ecuacionmodelo}). Therefore, using the balance of the coefficients (\ref%
{Conditions 1})- (\ref{Conditions 4}) we have 
\begin{align*}
a\left( t\right) \frac{\partial ^{2}u}{\partial x^{2}}& -\left( g\left(
t\right) -c\left( t\right) x\right) \frac{\partial u}{\partial x}+\left(
d\left( t\right) +L(t)+M(t)x-B(t)x^{2}+h(x,t)|u|^{p}\right) u \\
& =\alpha ^{\prime }(t)x^{2}v+\delta ^{\prime }xv+\kappa ^{\prime }(t)v-%
\dfrac{\mu ^{\prime }(t)}{2\mu (t)}v+\beta ^{\prime }xv_{\xi }+\varepsilon
^{\prime }(t)v_{\xi }+\gamma ^{\prime }(t)v_{\tau } \\
& =\dfrac{\partial u(x,t)}{\partial t},
\end{align*}%
\noindent as was claimed.
\end{proof}

The following Corollary shows that we can have solutions with singularities.

\begin{corollary}
Let's assume the conditions of Lemma 1. If the functions $\alpha (t)$, $%
\beta (t)$, $\gamma (t)$, $\delta (t)$, $\varepsilon (t)$ and $\kappa (t)$
satisfy the Riccati system (\ref{Ermakov01})-(\ref{Ermakov06}) (with $%
c_{0}=0 $) with solution (\ref{MKernel})-(\ref{kappa0}), (with $\mu \left(
0\right) , $ $\beta (0)\neq 0$), then there exists an interval $I$ of time
such that if $-\alpha \left( 0\right) \in \gamma _{0}(I),$ then (\ref{GNLH})
presents a solution with singularity at $T^{\ast }=\gamma _{0}^{-1}(-\alpha
\left( 0\right) )\in I$ of the form (\ref{substitution1}).
\end{corollary}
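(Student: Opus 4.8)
The plan is to read the blow-up directly off the explicit formulas $(\ref{MKernel})$--$(\ref{kappa0})$ for the solution of the Riccati system with $c_{0}=0$. In those formulas every one of $\mu(t),\alpha(t),\beta(t),\delta(t),\kappa(t)$ is built from fixed nonvanishing ``companion'' kernels together with the scalar factor $\alpha(0)+\gamma_{0}(t)$, and in particular $\mu(t)$ equals a nonvanishing kernel times $\alpha(0)+\gamma_{0}(t)$; hence $\mu(t)=0$ exactly when $\alpha(0)+\gamma_{0}(t)=0$, and at such a time the prefactor $\mu^{-1/2}e^{\alpha x^{2}+\delta x+\kappa}$ of the similarity transformation $(\ref{substitution1})$ degenerates while the inner function $v$ is just a fixed global solution of $(\ref{ecuacionmodelo})$. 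So the argument splits into: (i) pinning down, for a prescribed value of the parameter $\alpha(0)$, the first instant at which $\alpha(0)+\gamma_{0}(t)=0$, which requires inverting $\gamma_{0}$; and (ii) confirming that $u$ really ceases to be bounded there.

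For (i): the kernel $\gamma_{0}$ is continuous and its derivative keeps a fixed sign (compare the companion analogue of $(\ref{gamma})$, $\gamma_{0}'=a\beta_{0}^{2}$, with $\beta_{0}$ never vanishing), so $\gamma_{0}$ is strictly monotone on any time interval on which that companion kernel does not vanish. Let $I$ be the maximal such interval around the initial time; the hypotheses $\mu(0)\neq0$, $\beta(0)\neq0$ are precisely what put us in this branch of $(\ref{MKernel})$--$(\ref{kappa0})$ rather than in the Green's-function branch. Then $\gamma_{0}\colon I\to\gamma_{0}(I)$ is a homeomorphism onto the interval $\gamma_{0}(I)$, so whenever the parameter is chosen with $-\alpha(0)\in\gamma_{0}(I)$ the time $T^{\ast}:=\gamma_{0}^{-1}(-\alpha(0))\in I$ is well defined and satisfies $\alpha(0)+\gamma_{0}(T^{\ast})=0$. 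By strict monotonicity $\alpha(0)+\gamma_{0}(t)\neq0$ for $t\in[0,T^{\ast})$, so on $[0,T^{\ast})$ the functions $\mu,\alpha,\beta,\gamma,\delta,\varepsilon,\kappa$ of $(\ref{MKernel})$--$(\ref{kappa0})$ are finite with $\mu(t)\neq0$; choosing any global bounded solution $v$ of $(\ref{ecuacionmodelo})$ (for instance a traveling-wave, rational or elliptic solution from Table~1), Lemma~1 guarantees that $(\ref{substitution1})$ defines a classical solution $u$ of $(\ref{GNLH})$ on $[0,T^{\ast})$.

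For (ii): as $t\to T^{\ast-}$ we have $\alpha(0)+\gamma_{0}(t)\to0$ with the companion kernels bounded away from $0$, so $(\ref{MKernel})$ gives $\mu(t)\to0$ through a simple zero, hence $\mu^{-1/2}(t)\to+\infty$; via the substitution $\alpha=\mu'/(4a\mu)-d/(2a)$ this also forces $\alpha(t)\to\infty$, and the same factor in $(\ref{MKernel})$--$(\ref{kappa0})$ makes $\beta,\delta,\kappa$ unbounded. The natural conclusion is that $(\ref{substitution1})$, and hence $u$, cannot be continued as a bounded solution of $(\ref{GNLH})$ up to $T^{\ast}$, which is the asserted finite-time singularity at $T^{\ast}=\gamma_{0}^{-1}(-\alpha(0))$. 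The step that needs care --- and that I expect to be the main obstacle --- is to make this rigorous: one must verify that the $(\alpha(0)+\gamma_{0}(t))^{-1/2}$ divergence of $\mu^{-1/2}$ and the $(\alpha(0)+\gamma_{0}(t))^{-1}$ divergences of $\alpha,\delta,\kappa$ are not all cancelled in $u=\mu^{-1/2}e^{\alpha x^{2}+\delta x+\kappa}\,v(\xi,\tau)$ by the bounded factor $v$. I would do this by completing the square in the exponent, inserting the leading asymptotics of $\mu,\alpha,\delta,\kappa$ in powers of $\alpha(0)+\gamma_{0}(t)$ taken from $(\ref{MKernel})$--$(\ref{kappa0})$, and using that the chosen $v$ stays bounded away from $0$ along the relevant argument $\xi=\beta x+\varepsilon$; this pinpoints the point(s) $x$ (and the norm) at which $u$ diverges as $t\to T^{\ast-}$ and finishes the proof.
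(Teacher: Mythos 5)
Your argument matches the paper's: the paper likewise restricts to an interval $I$ on which $\mu_{0}\neq 0$ and $\mu_{0},\mu_{1}$ are linearly independent, notes $\gamma_{0}'(t)=W[\mu_{0},\mu_{1}]/(2\mu_{0}^{2})\neq 0$ there (equivalent to your $\gamma_{0}'=a\beta_{0}^{2}$), and concludes that $\mu(t)=-2\mu(0)\mu_{0}(t)(\alpha(0)+\gamma_{0}(t))$ vanishes at $T^{\ast}=\gamma_{0}^{-1}(-\alpha(0))$, which is taken as the singularity of (\ref{substitution1}). The non-cancellation analysis you flag in part (ii) is not carried out in the paper --- the vanishing of $\mu$, hence the blow-up of the prefactor $\mu^{-1/2}$, is itself treated as the singularity --- so your first two paragraphs already constitute the paper's entire proof.
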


\begin{proof}
We use the remarkable solution of the equation (\ref{Ermakov-P}) (with $%
c_{0}=0$)\ given explicitly by $\mu \left( t\right) =-2\mu \left( 0\right)
\mu _{0}\left( t\right) \left( \alpha \left( 0\right) +\gamma _{0}\left(
t\right) \right) ,$ where $\mu _{0}$ and $\mu _{1}$ are linear independent
solutions with initial conditions 
\begin{equation}
\mu _{0}\left( 0\right) =0,\quad \mu _{0}^{\prime }\left( 0\right) =2a\left(
0\right) \neq 0\qquad \mu _{1}\left( 0\right) \neq 0,\quad \mu _{1}^{\prime
}\left( 0\right) =0.
\end{equation}

Since there exists an interval $J$ of time with $\mu _{0}\left( t\right)
\neq 0$ for all $t\in J,$ and $\mu _{0}\left( t\right) $ and $\mu _{1}\left(
t\right) $ have been chosen to be linearly independent on an interval, let's
say $J^{\prime },$ we observe that for $t\in J\cap J^{\prime }\equiv I,$ we
get%
\begin{equation*}
\gamma _{0}^{\prime }(t)=\frac{W[\mu _{0}\left( t\right) ,\mu _{1}\left(
t\right) ]}{2\mu _{0}^{2}(t)}\neq 0,
\end{equation*}%
and therefore, from the general expression for $\mu $ the solution (\ref%
{substitution1}) will have a singularitie at $T^{\ast }=\gamma
_{0}^{-1}(-\alpha \left( 0\right) )\in I.$
\end{proof}

In order to construct abundant families using Lemma 1 and Corollary 1 with
explicit solutions. We will use the following explicit solutions for
standard models, taken from \cite{Clarkson93}\textbf{: }

\begin{itemize}
\item Fisher's equation, introduced in 1937, \cite{Fisher37}: it describes
the wave propagation of an advantageous gene in a population 
\begin{equation*}
u_{\tau }=u_{\xi \xi }+u(1-u);
\end{equation*}

an explicit solution is given by \cite{Abl:Zap}%
\begin{equation}
u_{1}(x,t)=\frac{1}{\left[ 1+(\sqrt{2}-1)\exp \left\{ \frac{x}{\sqrt{6}}-%
\frac{5}{6}t\right\} \right] ^{2}}.  \label{u1}
\end{equation}

\item Non-linear heat equation with absorption \cite{Brezis86}: it describes
a diffusion process that competes with an absorption process: 
\begin{equation*}
u_{\tau }=u_{\xi \xi }+Ku^{p+1}.
\end{equation*}%
An explicit solution for the case $K=-1$ and $p=2$ is%
\begin{equation}
u_{2}(x,t)=\frac{\sqrt{2}(2x+k_{1})}{x^{2}+k_{1}x+6t+k_{2}}.  \label{u2}
\end{equation}%
and an explicit solution for the case $K=1$ and $p=3$ is%
\begin{equation}
u_{3}(x,t)=\frac{1}{2}\sqrt{2}(x+k_{1})\sd\left( \frac{1}{2}x^{2}+k_{1}x+3t;%
\frac{1}{2}\sqrt{2}\right) .  \label{u3}
\end{equation}

\item Newell-Whitehead equation 1969 (${r}_{0}=1$ and ${h}_{0}=-1$): it
describes the Rayleigh-Benard convection%
\begin{equation*}
u_{t}=u_{xx}+u({r}_{0}+{h}_{0}u^{p}).
\end{equation*}%
\bigskip An explicit solution for the case ${r}_{0}=-1,$ $h_{0}=-1$ and $p=2$
is%
\begin{equation}
u_{4}(x,t)=\frac{k_{2}\sen(\frac{1}{2}\sqrt{2}x)}{k_{1}\exp \left( \frac{3}{2%
}t\right) +k_{2}\cos \left( \frac{1}{2}\sqrt{2}x\right) },  \label{U4}
\end{equation}%
an explicit solution for the case ${r}_{0}=-2,$ $h_{0}=1$ and $p=2$ is%
\begin{equation}
u_{5}(x,t)=\frac{1}{2}\sqrt{2}k_{1}\sen(x+k_{2})\exp (-3t)\sd(k_{1}\cos
(x+k_{2})\exp (-3t);\frac{1}{2}\sqrt{2})\newline
\label{U5}
\end{equation}%
and an explicit solution for the case ${r}_{0}=2,$ $h_{0}=1$ and $p=2$ is%
\begin{equation}
u_{6}(x,t)=\frac{1}{2}\sqrt{2}k_{1}\senh(x+k_{2})\exp (3t)\sd(k_{1}\cosh
(x+k_{2})\exp (3t);\frac{1}{2}\sqrt{2})\newline
.  \label{u6}
\end{equation}
\end{itemize}

\begin{figure}[tbp]
\begin{center}
{\small 
\subfigure[Solution (\ref{solermakov3}) to equation (\ref{Ermarkov3}), with $k_{1}=1,$ $k_{2}=1,$
$\gamma(0)=0,$ $\mu(0)=1.$]{\includegraphics[scale=0.39]{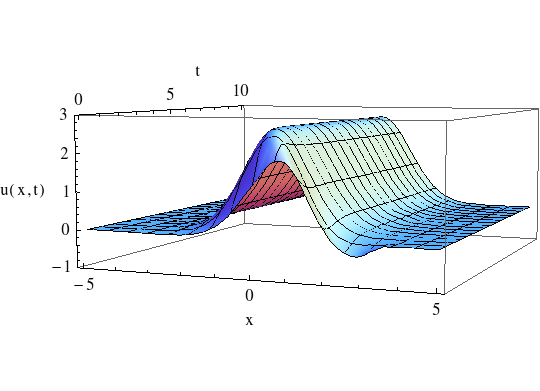}} 
\subfigure[Solution (\ref{solermakov3}) to equation (\ref{Ermarkov3}), with $k_{1}=1,$ $k_{2}=1,$ $\gamma(0)=1,$
$\mu(0)=1.$]{\includegraphics[scale=0.39]{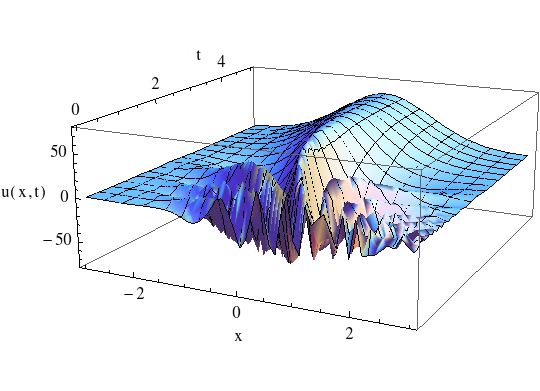}} 
\subfigure[Solution (\ref{solermakov3}) to equation (\ref{Ermarkov3}), with $k_{1}=1,$ $k_{2}=1,$ $\gamma(0)=1,$
$\mu(0)=5.$]{\includegraphics[scale=0.39]{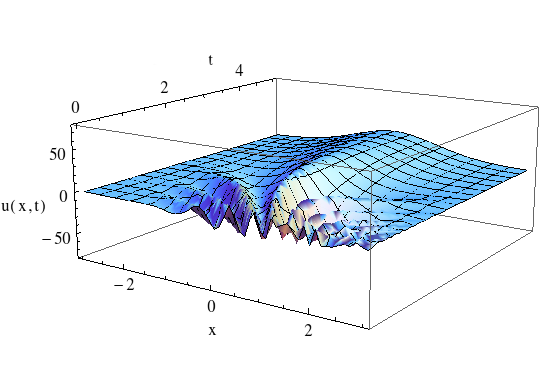}} 
\subfigure[Solution (\ref{solermakov3}) to equation (\ref{Ermarkov3}), with $k_{1}=2,$
$k_{2}=\foreignlanguage{english}{-0.5},$ $\gamma(0)=1,$
$\mu(0)=5.$]{\includegraphics[scale=0.39]{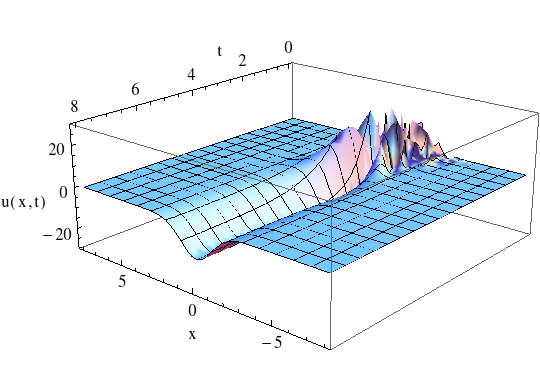}} }
\end{center}
\caption{Jacobi elliptic solutions for reaction diffusion equation (\protect
\ref{Ermarkov3}), with $-5<x<5,$ $0<t<10$. }
\end{figure}

Using Lemma 1, we will give examples of equations with multiparameter
solutions. These toy examples show the control of the dynamics of the
solutions as an application of our multiparameter approach. We have prepared
a Mathematica file as supplemental material for this Section. Also, all the
formulas from the appendix have been verified previously in \cite{Ko:Su;su}
and \cite{Suazo:Sus:Ve2}.

\subsection{Exponential-type solutions for a nonautonomous
reaction-diffusion model}

\label{18} Let $p=2,$ $\ $and $r_{0}=h_{0}=-1$ in (\ref{ecuacionmodelo})$.$
The following nonautonomous nolinear equation 
\begin{equation}
u_{t}=u_{xx}+xu_{x}+\left( \dfrac{2e^{-2t}-e^{-4t}x^{2}+4}{4e^{-2t}+4}%
\right) u-\dfrac{\mu (0)e^{-2t}}{\sqrt{8(e^{-2t}+1)}}\exp \left\{ \dfrac{%
(e^{-2t}+2)x^{2}}{2(1+e^{-2t})}\right\} u^{3}  \label{Ermarkov1}
\end{equation}%
can be transformed to a KPP-type equation, given by 
\begin{equation}
v_{\tau }=v_{\xi \xi }-v(1+v^{2})  \label{Fisher4}
\end{equation}%
with the help of the transformation 
\begin{equation*}
u(x,t)=\dfrac{e^{\alpha (t)x^{2}+\delta (t)x+\kappa (t)}}{\sqrt{\mu (t)}}%
v(\xi ,\tau ),
\end{equation*}%
where $v(\xi ,\tau )$ is a solution of the equation (\ref{Fisher4}), in this
case given by (\ref{U4}) and the parametric functions are given by: 
\begin{align}
\mu (t)& =\mu (0)\sqrt{\dfrac{e^{-2t}+1}{2}},  \label{ermarkov333} \\
\alpha (t)& =-\dfrac{e^{-2t}+2}{4(1+e^{-2t})}, \\
\gamma (t)& =\gamma (0)-\dfrac{1}{4}\ln \left[ \dfrac{e^{-2t}+1}{2}\right] ,
\\
\beta (t)& =\dfrac{e^{-t}}{\sqrt{2(e^{-2t}+1)}}, \\
\delta (t)& =\kappa (t)=\varepsilon (t)=0.  \label{ermakov3}
\end{align}%
Therefore, by Lemma 1 the differential equation (\ref{Ermarkov1}) allows the
following family of solutions with parameters $\mu (0),$ $\gamma (0),$ $%
k_{1},$ $k_{2}:$ 
\begin{equation}
u(x,t)=\dfrac{1}{\sqrt{\mu (0)\sqrt{\frac{e^{-2t}+1}{2}}}}\exp \left\{ -%
\dfrac{x^{2}(e^{-2t}+2)}{4(1+e^{-2t})}\right\} \dfrac{\sen(\sqrt{2}x\beta
(t)/2)k_{2}}{k_{1}\exp \left\{ 3\gamma (t)/2\right\} +k_{2}\cos (\sqrt{2}%
x\beta (t)/2)}.  \label{ermarkov1}
\end{equation}

\subsection{ Parameter variation in the equation (\protect\ref{Ermarkov1}):
Jacobi elliptic-type solution}

\noindent If we make a change in the parameter in (\ref{Ermarkov1}), $%
r_{0}=h_{0}=1,$ by Lemma 1 the equation 
\begin{equation}
u_{t}=u_{xx}+xu_{x}+\left( \dfrac{6e^{-2t}-e^{-4t}x^{2}+4}{4e^{-2t}+4}%
\right) u+\dfrac{\mu (0)e^{-2t}}{\sqrt{8(e^{-2t}+1)}}\exp \left\{ \dfrac{%
(e^{-2t}+2)x^{2}}{2(1+e^{-2t})}\right\} u^{3}  \label{Ermarkov3}
\end{equation}%
\noindent can be reduced to a KPP type model, given by 
\begin{equation}
v_{\tau }=v_{\xi \xi }+v(1+v^{2}).  \label{KPP25}
\end{equation}

\noindent Note that changing parameters $r_{0}$ and $h_{0}$ only affect the
signs in the diferential equation. However, equation (\ref{KPP25}) has
solution (\ref{u6}). \noindent Our differential model Equation (\ref%
{Ermarkov3}) has a solution with parameters $\mu (0),$ $\gamma (0),$ $k_{1}$
y $k_{2}$: 
\begin{align}
u(x,t)=& \dfrac{1}{\sqrt{\mu (0)\sqrt{\frac{e^{-2t}+1}{2}}}}\exp \left\{ -%
\dfrac{x^{2}(e^{-2t}+2)}{4(1+e^{-2t})}\right\} \times  \notag
\label{solermakov3} \\
& \dfrac{1}{2}\sqrt{2}k_{1}\senh(\xi +k_{2})e^{3t}\sd(k_{1}\cosh (\xi
+k_{2})e^{3t};\frac{\sqrt{2}}{2}).
\end{align}%
\noindent A graphic representation of the solution is presented in Figure 1;
in this case the parameter $\gamma (0)$ control the behavior of the function
in time, while $\mu (0)$ determines the amplitude of the solution. 
\begin{table}[tbp]
\caption{Families of generalized Fisher-KPP equations with explicit
solutions of the form (\protect\ref{sustitucionparticular}) through the
alternative Riccati system method approach, $v_{j}$ to be chosen from (%
\protect\ref{u1})- (\protect\ref{u6}).}{\small {\fontfamily{cmr10}%
\selectfont{
\begin{tabular}{|c|c|c|}
\hline
\textbf{\#} & \textbf{Nonlinear Reaction-Diffusion Equation} & \textbf{Solution} \\ \hline
1 & 
\begin{tabular}{c}
$u_{t}=u_{xx}+at^{m}xu_{x}+\left( r_{0}-\frac{1}{4}\left(
amt^{m-1}-a^{2}t^{2m}\right) x^{2}\right) u$ \\ 
$+h_{0}\left[ \exp \{\frac{t^{m+1}}{m+1}\}\right] ^{ap/2}\exp \{p\frac{a}{4}t^{m}x^{2}\}u^{p+1}$\end{tabular}
& $\frac{1}{\left[ \exp \{\frac{t^{m+1}}{m+1}\}\right] ^{a/2}}\exp \{-\frac{a}{4}t^{m}x^{2}\}v_{j}$ \\ \hline
2 & 
\begin{tabular}{c}
$u_{t}=u_{xx}-axu_{x}+\left( r_{0}+\frac{1}{4}a^{2}x^{2}\right) u$ \\ 
$+h_{0}\left[ \exp \{-t\}\right] ^{ap/2}\exp \{-p\frac{a}{4}x^{2}\}u^{p+1}$\end{tabular}
& $\frac{1}{\left[ \exp \{-t\}\right] ^{a/2}}\exp \{\frac{a}{4}x^{2}\}v_{j}$
\\ \hline
3 & 
\begin{tabular}{c}
$u_{t}=u_{xx}+ae^{\lambda t}xu_{x}+\left( r_{0}-\frac{1}{4}(a\lambda
e^{\lambda t}-a^{2}e^{2\lambda t})x^{2}\right) u$ \\ 
$+h_{0}\exp \{\left[ \frac{a}{\lambda }\left( e^{\lambda t}-1\right) \right]
\}^{p/2}\exp \{p\frac{a}{4}e^{\lambda t}x^{2}\}u^{p+1}$\end{tabular}
& $\frac{1}{\exp \{\frac{a}{\lambda }\left( e^{\lambda t}-1\right) \}^{1/2}}\exp \{-\frac{a}{4}e^{\lambda t}x^{2}\}v_{j}$ \\ \hline
4 & 
\begin{tabular}{c}
$u_{t}=u_{xx}+a\ln (t)xu_{x}+\left( r_{0}-\frac{1}{4}\left( \frac{a}{t}-a^{2}(\ln t)^{2}\right) x^{2}\right) u$ \\ 
$+h_{0}\left[ \left( te^{-1}\right) ^{at}\right] ^{p/2}t^{\frac{pa}{4}x^{2}}u^{p+1}$\end{tabular}
& $\frac{1}{\left( te^{-1}\right) ^{bt/2}}\exp \{-\frac{1}{4}b\ln
tx^{2}\}v_{j}$ \\ \hline
5 & 
\begin{tabular}{c}
${\small u}_{t}{\small =u}_{xx}{\small +ae}^{\lambda t^{2}}{\small xu}_{x}{\small +}\left( r_{0}-\left( \frac{a}{2}\lambda te^{\lambda t^{2}}-\frac{a^{2}}{4}e^{2\lambda t^{2}}\right) x^{2}\right) {\small u}$ \\ 
${\small +h}_{0}\left[ \exp \{\frac{a}{2}\sqrt{\frac{\pi }{\lambda }}\erf i\left( \sqrt{\lambda }t\right) \}\right] ^{p/2}\exp {\small \{p}\frac{a}{4}{\small e}^{\lambda t^{2}}{\small x}^{2}{\small \}u}^{p+1}$\end{tabular}
& $\frac{1}{\left[ \exp \{\frac{a}{2}\sqrt{\frac{\pi }{\lambda }}\erf i\left( \sqrt{\lambda }t\right) \}\right] ^{1/2}}\exp {\small \{-}\frac{a}{4}{\small e}^{\lambda t^{2}}{\small x}^{2}{\small \}v}_{j}$ \\ \hline
6 & 
\begin{tabular}{c}
${\small u}_{t}{\small =u}_{xx}{\small +a}\tanh {\small (\lambda t)xu}_{x}$
\\ 
${\small +}\left( r_{0}+\frac{a}{4}\left( \tanh ^{2}(\lambda t)(a+\lambda
)-\lambda \right) x^{2}\right) {\small u}$ \\ 
${\small +h}_{0}{\small |}\cosh {\small (\lambda t)|}^{ap/(2\lambda )}\exp 
{\small \{p}\frac{a}{4}\tanh {\small (\lambda t)x}^{2}{\small \}u}^{p+1}$\end{tabular}
& $\frac{1}{|\cosh (\lambda t)|^{a/(2\lambda )}}\exp {\small \{-}\frac{a}{4}\tanh {\small (\lambda t)x}^{2}{\small \}v}_{j}$ \\ \hline
7 & 
\begin{tabular}{c}
${\small u}_{t}{\small =u}_{xx}{\small +a}\coth {\small (\lambda t)xu}_{x}{\small +}r_{0}+$ \\ 
$\left( r_{0}+\frac{a}{4}\left( \coth ^{2}(\lambda t)(a+\lambda )-\lambda
\right) x^{2}\right) {\small u}$ \\ 
${\small +h}_{0}{\small |}\sinh {\small (\lambda t)|}^{ap/(2\lambda )}\exp 
{\small \{p}\frac{a}{4}\coth {\small (\lambda t)x}^{2}{\small \}u}^{p+1}$\end{tabular}
& $\frac{1}{|\sinh (\lambda t)|^{a/(2\lambda )}}\exp {\small \{-}\frac{a}{4}\coth {\small (\lambda t)x}^{2}{\small \}v}_{j}$ \\ \hline
8 & 
\begin{tabular}{c}
${\small u}_{t}{\small =u}_{xx}{\small +a}\cosh {\small (\lambda t)xu}_{x}$
\\ 
${\small +}\left( r_{0}+\frac{a}{4}\left( a-\lambda \sinh (\lambda t)+a\sinh
^{2}(\lambda t)\right) x^{2}\right) {\small u}$ \\ 
${\small +h}_{0}\exp {\small \{}\frac{a}{\lambda }\sinh {\small (\lambda t)\}}^{p/2}\exp {\small \{p}\frac{a}{4}\cosh {\small (\lambda t)x}^{2}{\small \}u}^{p+1}$\end{tabular}
& $\frac{1}{\left[ \exp \{\frac{a}{\lambda }\sinh (\lambda t)\}\right] ^{1/2}}\exp {\small \{-}\frac{a}{4}\cosh {\small (\lambda t)x}^{2}{\small \}v}_{j}$
\\ \hline
9 & 
\begin{tabular}{c}
${\small u}_{t}{\small =u}_{xx}{\small -a}\cos {\small (\lambda t)xu}_{x}$
\\ 
${\small +}\left( r_{0}+\frac{a}{4}\left( a-\lambda \sin (\lambda t)+a\sin
^{2}(\lambda t)\right) x^{2}\right) {\small u}$ \\ 
${\small +h}_{0}\exp {\small \{-}\frac{a}{\lambda }\sin {\small (\lambda t)\}}^{p/2}\exp {\small \{-p}\frac{a}{4}\cos {\small (\lambda t)x}^{2}{\small \}u}^{p+1}$\end{tabular}
& $\frac{1}{\left[ \exp \{-\frac{a}{\lambda }\sin (\lambda t)\}\right] ^{1/2}}\exp {\small \{}\frac{a}{4}\cos {\small (\lambda t)x}^{2}{\small \}v}_{j}$
\\ \hline
10 & 
\begin{tabular}{c}
${\small u}_{t}{\small =u}_{xx}{\small +a}\sin {\small (\lambda t)xu}_{x}$
\\ 
${\small +}\left( r_{0}+\frac{a}{4}\left( a-\lambda \cos (\lambda t)+a\cos
^{2}(\lambda t)\right) x^{2}\right) {\small u}$ \\ 
${\small +h}_{0}\exp {\small \{-}\frac{a}{\lambda }\cos {\small (\lambda t)\}}^{p/2}\exp {\small \{p}\frac{a}{4}\sin {\small (\lambda t)x}^{2}{\small \}u}^{p+1}$\end{tabular}
& $\frac{1}{\left[ \exp \{-\frac{a}{\lambda }\cos (\lambda t)\}\right] ^{1/2}}\exp {\small \{-}\frac{a}{4}\sin {\small (\lambda t)x}^{2}{\small \}v}_{j}$
\\ \hline
11 & 
\begin{tabular}{c}
${\small u}_{t}{\small =u}_{xx}{\small +a}\tan {\small (\lambda t)xu}_{x}$
\\ 
${\small +}\left( r_{0}+\frac{a}{4}\left( \tan ^{2}(\lambda t)(a+\lambda
)-\lambda \right) x^{2}\right) {\small u}$ \\ 
${\small +h}_{0}{\small |}\cos {\small (\lambda t)|}^{-ap/(2\lambda )}\exp 
{\small \{p}\frac{a}{4}\tan {\small (\lambda t)x}^{2}{\small \}u}^{p+1}$\end{tabular}
& $\frac{1}{|\cos (\lambda t)|^{-a/(2\lambda )}}\exp {\small \{-}\frac{a}{4}\tan {\small (\lambda t)x}^{2}{\small \}v}_{j}$ \\ \hline
12 & 
\begin{tabular}{c}
${\small u}_{t}{\small =u}_{xx}{\small -a}\cot {\small (\lambda t)xu}_{x}$
\\ 
${\small +}\left( r_{0}+\frac{a}{4}\left( \cot ^{2}(\lambda t)(a+\lambda
)-\lambda \right) x^{2}\right) {\small u}$ \\ 
${\small +h}_{0}{\small |}\sin {\small (\lambda t)|}^{-ap/(2\lambda )}\exp 
{\small \{-p}\frac{a}{4}\cot {\small (\lambda t)x}^{2}{\small \}u}^{p+1}$\end{tabular}
& $\frac{1}{|\sin (\lambda t)|^{-a/(2\lambda )}}\exp {\small \{}\frac{a}{4}\cot {\small (\lambda t)x}^{2}{\small \}v}_{j}$ \\ \hline
\end{tabular}\label{tabla}
}}\newline
\ }
\end{table}

\section{Alternative Riccati system method and Similarity Transformation for
Variable Coefficient Reaction Difussion Equations}

One diffculty of applying Lemma 1 is solving the Ermakov system. In this
section, we present an alternative approach to deal with the Riccati system
and see how the dynamics of the solutions change with multiparameters. If we
choose $\beta (t)\equiv 1,$ $\varepsilon (t)\equiv 0,$ $\gamma (t)\equiv
\tau \equiv t$ in the Riccati system, we obtain the following relation: 
\begin{equation*}
\alpha (t)=-\frac{1}{4}c(t),\hspace{0.7cm}\delta (t)=\frac{g(t)}{2}.
\end{equation*}

Further, the Riccati system (\ref{Ermakov01})-(\ref{Ermakov06}) with $%
c_{0}=0 $ is reduced to%
\begin{align}
c^{\prime }(t)& -c^{2}(t)-4(b(t)-c_{0})=0,  \label{sistemaparticular11} \\
\mu ^{\prime }(t)& -c(t)\mu (t)+2d(t)\mu (t)=0, \\
\delta ^{\prime }(t)& -\dfrac{1}{2}c(t)g(t)-f(t)=0, \\
\kappa ^{\prime }(t)& +\dfrac{g^{2}(t)}{4}=0  \label{sistemaparticular12}
\end{align}%
with initial conditions $\delta (0)=\frac{1}{2}g(0),$ $\mu (0)>0$ and $g(0),$
$c(0),$ $\kappa (0),$ constants. The solution of the particular Riccati
system (\ref{sistemaparticular11})-(\ref{sistemaparticular12}) is given
explicitly by 
\begin{align}
c^{^{\prime }}(t)-c^{2}(t)& -4(b(t)-c_{0})=0,  \label{sistemafinal11} \\
\alpha (t)& =-\dfrac{1}{4}c(t), \\
\delta (t)& =\dfrac{1}{2}g(t), \\
\kappa (t)& =-\dfrac{1}{4}\int_{0}^{t}g^{2}(z)dz+\kappa (0), \\
\mu (t)& =\mu (0)\exp \left\{ \int_{0}^{t}(c(z)-2d(z))dz\right\} , \\
g(t)=2\exp \left\{ \int_{0}^{t}c(z)dz\right\} & \int_{0}^{t}f(z)\exp \left\{
-\int_{0}^{z}c(w)dw\right\} dz+g(0),  \label{sistemafinal12}
\end{align}

subject to $\delta (0)=\frac{1}{2}g(0)$ $\mu (0)>0$ and $c(0),$ $\kappa (0),$
arbitrary constants.

\bigskip 
\begin{table}[tbp]
\caption{Examples of equations with singularities of the form (\protect\ref%
{sustitucionparticular}). $v_{i}$ to be chosen from (\protect\ref{u1})- (%
\protect\ref{u6}).}{\small {\fontfamily{cmr10}%
\selectfont{
\begin{tabular}{|l|l|l|}
\hline
\# & \textbf{Nonlinear Reaction-Difusion Equation} & \textbf{Solution} \\ 
\hline
13 & 
\begin{tabular}{c}
$u_{t}=u_{xx}-\frac{1}{t}xu_{x}+r_{0}u$ \\ 
$+h_{0}t^{-p/2}\exp \{-p\frac{1}{4t}x^{2}\}u^{p+1}$\end{tabular}
& $\frac{1}{t^{1/2}}\exp \{\frac{1}{4t}x^{2}\}v_{j}$ \\ \hline
14 & 
\begin{tabular}{c}
$u_{t}=u_{xx}-\coth (t)xu_{x}+\left( r_{0}+\frac{1}{4}x^{2}\right) u$ \\ 
$+h_{0}|\csch(t)|^{p/2}\exp \{-p\frac{1}{4}\coth (t)x^{2}\}u^{p+1}$\end{tabular}
& $\frac{1}{|\csch(t)|^{1/2}}\exp \{\frac{1}{4}\coth (t)x^{2}\}v_{j}$ \\ 
\hline
15 & 
\begin{tabular}{c}
$u_{t}=u_{xx}-\tanh (t)xu_{x}+\left( r_{0}+\frac{1}{4}x^{2}\right) u$ \\ 
$+h_{0}|\sech(t)|^{p/2}\exp \{-p\frac{1}{4}\tanh (t)x^{2}\}u^{p+1}$\end{tabular}
& $\frac{1}{|\sech(t)|^{1/2}}\exp \{\frac{1}{4}\tanh (t)x^{2}\}v_{j}$ \\ 
\hline
16 & 
\begin{tabular}{c}
$u_{t}=u_{xx}-\cot (t)xu_{x}+\left( r_{0}-\frac{1}{4}x^{2}\right) u$ \\ 
$+h_{0}|\csc (t)|^{p/2}\exp \{-\frac{p}{4}\cot (t)x^{2}\}u^{p+1}$\end{tabular}
& $\frac{1}{|\csc (t)|^{1/2}}\exp \{\frac{1}{4}\cot (t)x^{2}\}v_{j}$ \\ 
\hline
17 & 
\begin{tabular}{c}
$u_{t}=u_{xx}+\tan (t)xu_{x}+\left( r_{0}-\frac{1}{4}x^{2}\right) u$ \\ 
$+h_{0}|\sec (t)|^{p/2}\exp \{\frac{p}{4}\tan (t)x^{2}\}u^{p+1}$\end{tabular}
& $\frac{1}{|\sec (t)|^{1/2}}\exp \{-\frac{1}{4}\tan (t)x^{2}\}v_{j}$ \\ 
\hline
\end{tabular}\label{tabla}
}}\newline
\ }
\end{table}

\begin{lem}
\label{resultadoparticular} Let $p>0,$ $x\in 
\mathbb{R}
$ and $t>0.$ If we define 
\begin{equation*}
h(x,t)=h_{0}\mu ^{p/2}(t)\exp \left\{ -p(\alpha (t)x^{2}+\delta (t)x+\kappa
(t))\right\}
\end{equation*}%
and the functions $c(t),$ $g(t),$ $\alpha (t),$ $\kappa (t),$ $\delta (t)$
such that the equations (\ref{sistemafinal11})-(\ref{sistemafinal12}) are
satisfied, then the function 
\begin{equation}
u(x,t)=\dfrac{1}{\sqrt{\mu (t)}}\exp \left\{ \alpha (t)x^{2}+\delta
(t)x+\kappa (t)\right\} v(x,t)  \label{sustitucionparticular}
\end{equation}%
\noindent is a solution of the equation 
\begin{equation}
u_{t}=u_{xx}-(g(t)-c(t)x)u_{x}+(d(t)+r_{0}+f(t)x-(b(t)-c_{0})x^{2})u+h(x,t)u^{p+1}
\label{GNLHparticular}
\end{equation}%
\noindent where $v(x,t)$ is a solution of the equation ($r_{0},$ $h_{0}$
constants) 
\begin{equation}
v_{t}=v_{xx}+v(r_{0}+h_{0}v^{p}).  \label{ecuacionmodeloparticular}
\end{equation}
\end{lem}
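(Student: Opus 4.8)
The plan is to derive Lemma~\ref{resultadoparticular} as a corollary of Lemma~1, obtained by specializing the free functions in the similarity transformation (\ref{substitution1}) to $a(t)\equiv 1$, $\beta(t)\equiv 1$, $\varepsilon(t)\equiv 0$, $\gamma(t)=t$, and taking $c_{0}=0$ in (\ref{Ermakov-P}). With these choices the transformation (\ref{substitution1}) collapses to $\xi=x$, $\tau=t$, which is exactly the ansatz (\ref{sustitucionparticular}), and the model equation (\ref{ecuacionmodelo}) becomes (\ref{ecuacionmodeloparticular}); so once the three hypotheses of Lemma~1 are checked, the conclusion is immediate.

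The first step is bookkeeping on the coefficients. Comparing (\ref{GNLHparticular}) with the general equation (\ref{GNLH}) and using $a\equiv 1$, one reads off $L(t)=r_{0}$, $M(t)=f(t)$, $B(t)=b(t)-c_{0}$, and the prescribed nonlinearity coefficient $h(x,t)=h_{0}\mu^{p/2}(t)\exp\{-p(\alpha(t)x^{2}+\delta(t)x+\kappa(t))\}$; these are precisely the balance relations (\ref{Conditions 1})--(\ref{Conditions 4}) of Lemma~1 evaluated at $a=\beta=1$, $\varepsilon=0$, $c_{0}=0$, so hypothesis~3 holds by construction.

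The second step is to show the ansatz is consistent with the Riccati--Ermakov system (\ref{Ermakov01})--(\ref{Ermakov06}) and collapses it to the reduced system (\ref{sistemaparticular11})--(\ref{sistemaparticular12}). Inserting $\beta\equiv 1$, $\varepsilon\equiv 0$, $a\equiv 1$: the equation for $\beta$ forces $c+4\alpha=0$, i.e.\ $\alpha=-\tfrac14 c$; the equation for $\gamma$ gives $\gamma'=1$, so $\gamma(t)=t$; and the equation for $\varepsilon$ forces $g=2\delta$. Substituting these back, (\ref{Ermakov01}) together with the standard substitution $\alpha=\tfrac14\mu'/\mu-\tfrac{d}{2}$ (the $a\equiv1$ case of the one used in the proof of Lemma~1) separates into the Riccati equation for $c$ and a linear first-order equation for $\mu$; the equation for $\delta$ becomes $\delta'-\tfrac12 cg-f=0$; and (\ref{Ermakov06}) becomes $\kappa'+\tfrac14 g^{2}=0$. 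Each of these last three is solved by an integrating factor or a direct quadrature, giving the closed forms (\ref{sistemafinal11})--(\ref{sistemafinal12}), while the Riccati equation for $c$ is retained as the standing hypothesis. With hypotheses~1--3 of Lemma~1 now in place, a single application of Lemma~1 finishes the proof.

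The only genuinely delicate point is the consistency check in the second step: one must verify that imposing $\beta\equiv 1$ --- hence $\alpha=-\tfrac14 c$ from the $\beta$-equation --- does not clash with the relation $\alpha=\tfrac14\mu'/\mu-\tfrac{d}{2}$ that links $\alpha$ and $\mu$ in the proof of Lemma~1. Equating the two expressions for $\alpha$ is exactly what turns (\ref{Ermakov01}) into the Riccati equation (\ref{sistemafinal11}) for $c$ and simultaneously pins $\mu$ down by its linear equation, so no contradiction arises, precisely under the hypotheses stated; everything else is routine integration. If a self-contained argument is preferred, one may bypass Lemma~1 and substitute (\ref{sustitucionparticular}) directly into (\ref{GNLHparticular}), repeating the grouping computation from the proof of Lemma~1 with $a\equiv\beta\equiv1$ and $\varepsilon\equiv0$; this is shorter but merely reproves a special case of a result already available.
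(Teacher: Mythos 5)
Your overall route --- specializing the similarity transformation of Lemma~1 to $a\equiv\beta\equiv1$, $\varepsilon\equiv0$, $\gamma(t)=t$, reading off $\alpha=-\tfrac14 c$, $\delta=\tfrac12 g$, the Riccati equation for $c$, and the balance conditions --- is exactly what the paper intends (its entire proof is the sentence ``The proof is similar to Lemma 1,'' and Section~3 opens with precisely this specialization), and most of your bookkeeping is correct. However, the one step you yourself single out as ``genuinely delicate'' is where the argument as written fails. You equate $\alpha=-\tfrac14 c$ with the substitution $\alpha=\tfrac14\mu'/\mu-\tfrac{d}{2}$ quoted from the proof of Lemma~1; this yields $\mu'=(2d-c)\mu$, i.e.\ $\mu(t)=\mu(0)\exp\{\int_0^t(2d-c)\,dz\}$, which is the \emph{reciprocal} of the function $\mu(t)=\mu(0)\exp\{\int_0^t(c-2d)\,dz\}$ appearing in the hypotheses (\ref{sistemafinal11})--(\ref{sistemafinal12}) that you claim to recover. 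The discrepancy is not cosmetic: $\mu$ enters both the prefactor $\mu^{-1/2}$ in (\ref{sustitucionparticular}) and the nonlinearity through $\mu^{p/2}$, and the zeroth-order (in $x$) coefficient balance reads $\kappa'-\tfrac{\mu'}{2\mu}=\delta^{2}-g\delta+2\alpha+d$, which with $\kappa'=-g^{2}/4$ and $\delta=g/2$ forces $-\tfrac{\mu'}{2\mu}=2\alpha+d=-\tfrac{c}{2}+d$, i.e.\ $\mu'=(c-2d)\mu$. With your $\mu$ a residual term $(c-2d)v$ survives, so $v$ solves (\ref{ecuacionmodeloparticular}) only when $c\equiv 2d$. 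The sign you used is in fact a typo in the printed proof of Lemma~1: the relation consistent with the appendix formula (\ref{alpha0}) and with direct substitution is $\alpha=-\tfrac{1}{4a}\tfrac{\mu'}{\mu}-\tfrac{d}{2a}$, and with that sign your consistency check does go through and recovers (\ref{sistemafinal12}). So the claim ``no contradiction arises'' needs this correction before it is true.

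Two smaller points. First, you fix $c_{0}=0$, but the lemma and the paper's subsequent examples (e.g.\ equation (\ref{Ermarkov2}) uses $c_{0}=1$) keep $c_{0}\in\{0,1\}$; with $\beta\equiv1$ and $\varepsilon\equiv0$ the only surviving $c_{0}$-terms are the constant $c_{0}\beta^{4}=c_{0}$ in the $\alpha$-equation and $B=b-c_{0}$, so your argument extends verbatim, but as written it covers only the case $c_{0}=0$. Second, if you insist on routing the proof through Lemma~1 as a black box rather than redoing the substitution, you must also verify hypothesis~1 of Lemma~1, namely that the first-order $\mu$ satisfies the second-order characteristic equation (\ref{Ermakov-P}); this does hold once $c'=c^{2}+4(b-c_{0})$ is imposed, but it requires a line of computation and is not automatic. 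Given these frictions, your closing suggestion --- substituting (\ref{sustitucionparticular}) directly into (\ref{GNLHparticular}) and matching coefficients of $x^{2}v$, $xv$, $v$, and $v_{x}$ --- is actually the cleaner and safer route, and is what the paper means by ``similar to Lemma 1.''
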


The proof is similar to Lemma 1.

\begin{remark}
The solutions presented in Tables 1 and 2 are of the form (\ref%
{sustitucionparticular}) where for convenience of the presentation $\mu ,$ $%
\alpha ,\beta ,\gamma ,\delta ,$ $\varepsilon $ and $\kappa $ satisfies (\ref%
{alpha0})-(\ref{kappa0}), but we would like to emphasize that the most
general multiparameter solutions are given by (\ref{sustitucionparticular})
and $\mu ,$ $\alpha ,\beta ,\gamma ,\delta ,$ $\varepsilon $ and $\kappa $
satisfies (\ref{AKernel})-(\ref{kappa0}).\bigskip
\end{remark}

\subsection{ Jacobi elliptic-type solutions for a nonautonomous
reaction-diffusion model.}

Choosing $c(t)=\frac{\sech^{2}(t/2)}{2+2\tanh (t/2)},$ let $x\in 
\mathbb{R}
,$ $t>0,$ $p=2,$ $r_{0}=-2,$ $c_{0}=h_{0}=1.$ Following Lemma \ref%
{resultadoparticular} the nonlinear nonautonomous equation 
\begin{align}
u_{t}& =u_{xx}+\left( -1+\frac{1}{2}e^{t/2}\tanh \left( \frac{t}{2}\right) x+%
\frac{1}{8}\left( 1+\tanh \left( \frac{t}{2}\right) \right) x^{2}\right) u 
\notag  \label{Ermarkov2} \\
& -\left( 2e^{t/2}-\dfrac{\sech^{2}\left( \frac{t}{2}\right) }{2\left(
1+\tanh \left( \frac{t}{2}\right) \right) }\right) xu_{x}+\exp \left\{
-2xe^{t/2}+2e^{t}-2+2\kappa (0)\right\} \\
& \times \mu (0)\sech\left( \frac{t}{2}\right) \exp \left\{ \frac{-3t}{2}+%
\frac{\sech^{2}\left( \frac{t}{2}\right) }{4(1+\tanh \left( \frac{t}{2}%
\right) )}x^{2}\right\} u^{3}  \notag
\end{align}%
can be translated to the Fisher-Kolmogorov type model 
\begin{equation}
v_{t}=v_{xx}+v(v^{2}-2)  \label{trans1}
\end{equation}%
by the transformation (\ref{sustitucionparticular}) \noindent where $v(x,t)$
is solution of the equation (\ref{trans1}), given by (\ref{U5}), \noindent
and the parametric functions $\alpha (t),$ $\mu (t),$ $\kappa (t),$ $\delta
(t),$ are given by 
\begin{align}
\alpha (t)& =-\dfrac{\sech^{2}(\frac{t}{2})}{8(1+\tanh (\frac{t}{2}))}, \\
\mu (t)& =\mu (0)\sech\left( \frac{t}{2}\right) e^{\frac{-3t}{2}}, \\
\delta (t)& =e^{t/2}, \\
\kappa (t)& =1-e^{t}+\kappa (0).
\end{align}%
Therefore, Lemma \ref{resultadoparticular} ensures that equation (\ref%
{Ermarkov2}) allows the following global solution in terms of exponential
and Jacobi elliptic functions: 
\begin{align}
u(x,t)=& \dfrac{\exp \left\{ -\dfrac{\sech^{2}(\frac{t}{2})}{8(1+\tanh (%
\frac{t}{2}))}x^{2}+e^{\frac{t}{2}}x-e^{t}+1+\kappa (0)\right\} }{\sqrt{\mu
(0)\sech(\frac{t}{2})}e^{\frac{-3t}{4}}}  \label{ermarkov2} \\
& \times \dfrac{\sqrt{2}}{2}k_{1}\sen(x+k_{2})e^{-3t}\sd\left( k_{1}\cosh
(x+k_{2})e^{-3t};\dfrac{1}{2}\sqrt{2}\right) .  \notag
\end{align}%
\begin{figure}[h]
{\small \centering
\subfigure[Solution (\ref{ermarkov2}) with $k_{1}=1,$ $k_{2}=2,$ $\kappa(0)=0,$ $\mu(0)=1.$]{
\includegraphics[scale=0.38]{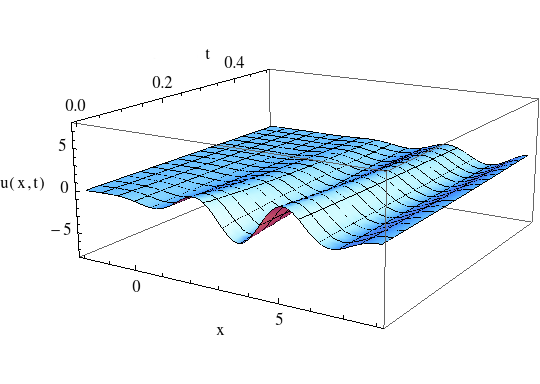}} 
\subfigure[Solution(\ref{ermarkov2}) with $k_{1}=1,$ $k_{2}=2,$ $\kappa(0)=5,$ $\mu(0)=10.$]{
\includegraphics[scale=0.38]{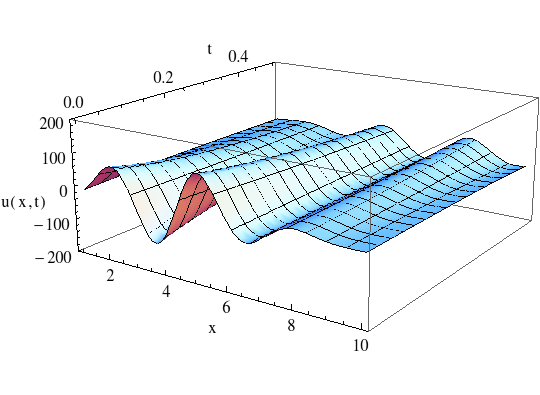} } 
\subfigure[Solution
(\ref{ermarkov2}) with
$k_{1}=3,$ $k_{2}= \foreignlanguage{english}{0.5},$ $\kappa(0)=1,$
$\mu(0)=5.$]{\includegraphics[scale=0.38]{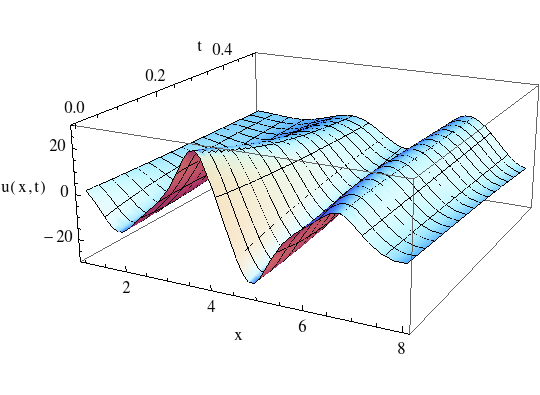}} }
\caption{ Particular solutions for the reaction-diffusion model (\protect\ref%
{Ermarkov2}), with $2<x<3,$ $0<t<\foreignlanguage{english}{0.5}.$}
\label{ermarkovf2}
\end{figure}

\subsection{ Rational solution for a nonautonomous reaction-diffusion model:}

Choosing $c(t)=\csch(t)\sech(t).$ Let $x\in 
\mathbb{R}
,$ $t>0,$ $p=2,$ $r_{0}=c_{0}=0,$ $h_{0}=-1.$ By Lemma \ref%
{resultadoparticular}, it follows that the nonlinear nonautonomous equation 
\begin{align}
u_{t}=& u_{xx}+\csch t\sech txu_{x}+\dfrac{1}{2}\left( 1+\csch t\sech t+\csch%
^{2}tx^{2}\right) u  \label{Riccati1} \\
& -\mu (0)\exp \left\{ \dfrac{1}{2}\csch t\sech tx^{2}-2\kappa (0)-t\right\}
u^{3}  \notag
\end{align}%
can be reduced to nonlinear absorption model given by 
\begin{equation}
v_{t}=v_{xx}-v^{3}  \label{power3}
\end{equation}%
by the transformation (\ref{sustitucionparticular}) \ where $v(x,t)$ is a
solution of the equation (\ref{power3}), given by (\ref{u2}), and the
parametric functions $\alpha (t),$ $\mu (t),$ $\kappa (t),$ $\delta (t)$ are
given by 
\begin{align}
\alpha (t)& =-\dfrac{1}{4}\csch t\sech t, \\
\mu (t)& =\mu (0)e^{-t}, \\
\kappa (t)& =\kappa (0), \\
\delta (t)& =0.
\end{align}

Therefore, Lemma \ref{resultadoparticular} ensures that equation (\ref%
{Riccati1}) allows the following global solution with exponential functions,
with parameters $\mu (0),$ $\kappa (0),$ $k_{1}$ and $k_{2}$ determining the
amplitude of the function 
\begin{equation*}
u(x,t)=(\mu (0))^{-1/2}\exp \left\{ -\dfrac{1}{4}\csch t\sech tx^{2}+\kappa
(0)+\dfrac{t}{2}\right\} \dfrac{\sqrt{2}(2x+k_{1})}{x^{2}+k_{1}x+6t+k_{2}}.
\end{equation*}

\subsection{Jacobi elliptic-type solution for a nonautonomous
reaction-diffusion model}

Choosing $c(t)=\tanh (t).$ Let $x\in 
\mathbb{R}
,$ $t>0,$ and $p=2,$ $r_{0}=c_{0}=0,$ $h_{0}=1.$ From Lemma \ref%
{resultadoparticular} it follows that the following nonautonomous nonlinear
equation 
\begin{align}
u_{t}=& u_{xx}+2\sech tu_{x}+\tanh txu_{x}+\left( 1+2x\tanh t\sech t-\dfrac{1%
}{4}(1-2\tanh ^{2}t)x^{2}\right) u  \notag  \label{Riccati2} \\
& +\mu (0)\exp \left\{ -2\left( -\dfrac{1}{4}\tanh tx^{2}-x\sech t+\kappa
(0)-\tanh t\right) -2t\right\} \cosh tu^{3}
\end{align}%
can be reduced to the absorption model given by 
\begin{equation}
v_{t}=v_{xx}+v^{3}  \label{+power3}
\end{equation}%
by the transformation 
\begin{equation*}
u(x,t)=\dfrac{e^{\alpha (t)x^{2}+\delta (t)x+\kappa (t)}}{\sqrt{\mu (t)}}%
v(x,t),
\end{equation*}%
where $v(x,t)$ is a solution of the equation (\ref{+power3}), given by 
\begin{equation*}
v(x,t)=\dfrac{\sqrt{2}}{2}(x+k_{1})\sd\left( \dfrac{1}{2}x^{2}+k_{1}x+3t;%
\dfrac{1}{2}\sqrt{2}\right) ,
\end{equation*}%
and the parametric functions $\alpha (t),$ $\mu (t),$ $\kappa (t),$ $\delta
(t),$ are given by: 
\begin{align}
\alpha (t)& =-\dfrac{1}{4}\tanh t, \\
\delta (t)& =-\sech t, \\
\kappa (t)& =\kappa (0)-\tanh t, \\
\mu (t)& =\mu (0)\cosh t\exp (-2t).
\end{align}

We obtain from Lemma \ref{resultadoparticular} that equation (\ref{Riccati2}%
) allows the following global solution with exponential and Jacobi elliptic
functions with parameters $\mu (0),$ $\kappa (0),$ which determine the
amplitude: 
\begin{align}
u(x,t)=& \dfrac{\exp \left\{ -\frac{1}{4}\tanh tx^{2}-x\sech t+\kappa
(0)-\tanh t\right\} }{\sqrt{\mu (0)\cosh t\exp (-2t)}}  \label{riccati2} \\
& \times \dfrac{\sqrt{2}}{2}(x+k_{1})\sd\left( \dfrac{1}{2}x^{2}+k_{1}x+3t;%
\dfrac{1}{2}\sqrt{2}\right) .  \notag
\end{align}%
\begin{figure}[h]
{\small \centering
\subfigure[Solution (\ref{riccati2}), with $k_{1}=1,$ $\kappa(0)=- \foreignlanguage{english}{0.5},$ $\mu(0)=20.$]{
\includegraphics[scale=0.37]{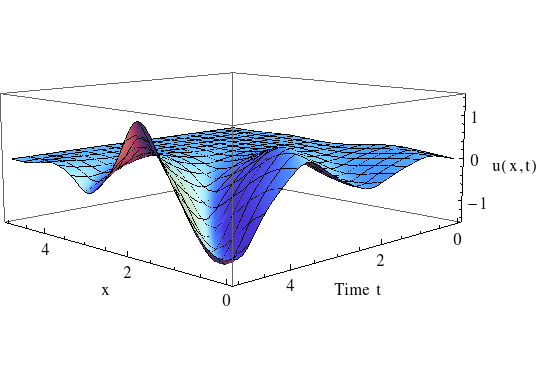}} 
\subfigure[Solution (\ref{riccati2}), with $k_{1}=1,$ $\kappa(0)= \foreignlanguage{english}{0.5},$ $\mu(0)=1.$]{
\includegraphics[scale=0.37]{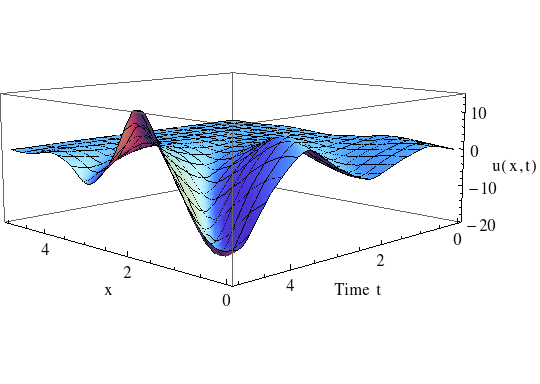} } 
\subfigure[Solution (\ref{riccati2}), with $k_{1}=3,$  $\kappa(0)=\foreignlanguage{english}{0.5},$ $\mu(0)=10.$]{
\includegraphics[scale=0.37]{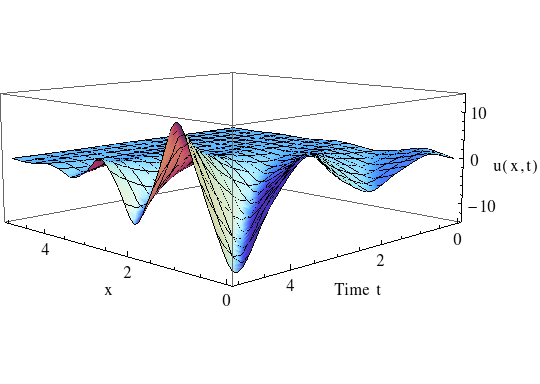}} 
\subfigure[Solution (\ref{riccati2}), with $k_{1}=7,$  $\kappa(0)=\foreignlanguage{english}{0.5},$ $\mu(0)=10.$]{
\includegraphics[scale=0.37]{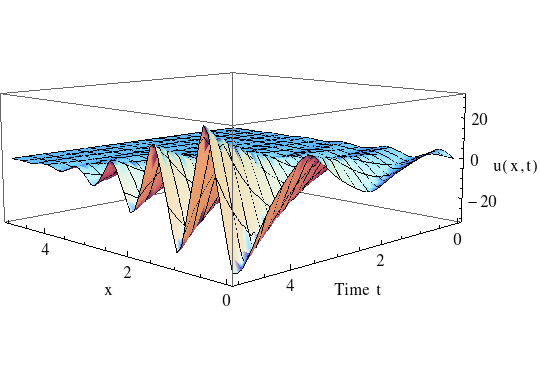}} }
\caption{Particular solutions for the Equation (\protect\ref{Riccati2}), $%
0<x<5,$ $0<t<5.$}
\label{riccati2f}
\end{figure}

\label{Existencia} We recall a result of existence and uniqueness for the
standard nonlinear heat equation (\ref{NLH}) \cite{Kamin85} and the
subsequent analogous existence result for the nonautonomous nonlinear heat
equation (\ref{GNLHparticular}). The space $C_{0}(%
\mathbb{R}
)$ denotes the set of the continuous functions converging to $0$ when $%
|x|\longrightarrow \infty ,$ with the norm $\parallel f(x)\parallel _{\infty
}=\sup_{x}|f(x)|.$ Moreover the space $C([0,\infty ),C_{0}(%
\mathbb{R}
))$ denotes the set of the continuous functions defined in $[0,\infty )$
with values in the space $C_{0}(%
\mathbb{R}
).$

\begin{teo}[\protect\cite{Kamin85}]
\label{existencia} Given $p>0$ consider the heat equation 
\begin{equation}
\dfrac{\partial u}{\partial t}=\dfrac{\partial ^{2}u}{\partial x^{2}}%
-|u|^{p}u,  \label{NLH}
\end{equation}%
$x\in R,$ $t>0.$ If $u_{0}\in C_{0}\left( \mathbb{R}\right) ,$ there exists
a unique global solution of (\ref{NLH}) $u\in C\left( \left[ 0,\infty
\right) ,C_{0}(\mathbb{R})\right) ,$ such that $u(x,t)$ satisfies the
initial condition $u(x,0)=u_{0}(x).$
\end{teo}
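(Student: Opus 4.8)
The plan is to recast (\ref{NLH}) as a fixed-point problem for its mild (Duhamel) formulation and then upgrade the resulting local solution to a global one by an a priori bound coming from the favorable sign of the absorption term. Let $G(x,t)=(4\pi t)^{-1/2}\exp\{-x^{2}/(4t)\}$ be the one-dimensional heat kernel and $\{e^{t\partial_{xx}}\}_{t\ge 0}$ the associated convolution semigroup, which is a strongly continuous contraction semigroup on $C_{0}(\mathbb{R})$. Call $u\in C([0,T],C_{0}(\mathbb{R}))$ a mild solution of (\ref{NLH}) with datum $u_{0}$ if
\begin{equation*}
u(\cdot,t)=e^{t\partial_{xx}}u_{0}-\int_{0}^{t}e^{(t-s)\partial_{xx}}\big(|u(\cdot,s)|^{p}u(\cdot,s)\big)\,ds .
\end{equation*}
Since $z\mapsto|z|^{p}z$ is locally Lipschitz (with constant $(p+1)R^{p}$ on $\{|z|\le R\}$) for every $p>0$, the induced Nemytskii operator maps $C_{0}(\mathbb{R})$ into itself, so the right-hand side above indeed defines a map $\Phi$ of $C([0,T],C_{0}(\mathbb{R}))$ into itself.

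First I would prove local well-posedness. Put $R=2\|u_{0}\|_{\infty}$ and restrict $\Phi$ to the closed ball of radius $R$ in $C([0,T],C_{0}(\mathbb{R}))$, with norm $\|\cdot\|$ the supremum in $t$ of $\|\cdot\|_{\infty}$. Using $\|e^{t\partial_{xx}}\|\le 1$ one obtains $\|\Phi u\|\le\|u_{0}\|_{\infty}+T(p+1)R^{p+1}$ and $\|\Phi u-\Phi w\|\le T(p+1)R^{p}\|u-w\|$, so for $T=T(\|u_{0}\|_{\infty})$ small enough $\Phi$ maps this ball into itself and is a contraction there. The Banach fixed point theorem then gives a unique mild solution on $[0,T]$, and parabolic smoothing (the regularizing action of $e^{t\partial_{xx}}$ together with a bootstrap in the Duhamel term) shows it is a classical solution of (\ref{NLH}) for $t>0$.

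The decisive step is the a priori estimate $\|u(\cdot,t)\|_{\infty}\le\|u_{0}\|_{\infty}$ for all $t\ge 0$, which is precisely where the absorption nonlinearity $-|u|^{p}u$ enters. For the classical solution one argues by the parabolic maximum principle on large space--time rectangles --- using $u(\cdot,t)\in C_{0}(\mathbb{R})$ to control the boundary contribution as $|x|\to\infty$ --- and observes that at an interior positive maximum $u_{t}\le u_{xx}-|u|^{p}u<0$, while symmetrically an interior negative minimum is nondecreasing; alternatively one passes to the limit in the same inequality for a regularized (truncated) problem. Because the existence time in the contraction argument depends only on $\|u_{0}\|_{\infty}$, this bound lets one iterate the local construction on $[T,2T]$, $[2T,3T],\dots$ with a fixed step, producing a solution on all of $[0,\infty)$ and preserving $C_{0}(\mathbb{R})$-valued continuity at each stage. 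Uniqueness among global solutions with the same datum follows by subtracting the two integral equations and using the local Lipschitz bound together with $\|u(\cdot,t)\|_{\infty}\le\|u_{0}\|_{\infty}$ to get $\|u(\cdot,t)-\tilde u(\cdot,t)\|_{\infty}\le C\int_{0}^{t}\|u(\cdot,s)-\tilde u(\cdot,s)\|_{\infty}\,ds$, whence Gronwall's inequality. I expect the genuinely delicate point to be the rigorous justification of the maximum-principle bound at the level of the mild solution --- the regularization-and-limit argument together with the decay control at spatial infinity --- the semigroup fixed-point steps being essentially routine.
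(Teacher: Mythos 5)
The paper offers no proof of this theorem at all: it is quoted verbatim from the cited reference (Kamin--Peletier), so there is no internal argument to compare yours against. Taken on its own terms, your sketch is the standard and essentially correct route: Duhamel formulation on $C_{0}(\mathbb{R})$, local existence by contraction using the local Lipschitz continuity of $z\mapsto|z|^{p}z$ (valid for every $p>0$ since the derivative $(p+1)|z|^{p}$ is continuous), the absorption sign giving the uniform bound $\|u(\cdot,t)\|_{\infty}\le\|u_{0}\|_{\infty}$, and iteration with a step depending only on $\|u_{0}\|_{\infty}$. Two points deserve the care you already flagged. First, the maximum principle on the unbounded strip needs either the Phragm\'en--Lindel\"of version for bounded solutions or, more simply, the comparison of $u$ with the constant super/subsolutions $\pm\|u_{0}\|_{\infty}$ after rewriting the equation as $u_{t}=u_{xx}-c(x,t)u$ with $c=|u|^{p}\ge 0$; the decay $u(\cdot,t)\in C_{0}(\mathbb{R})$ makes the boundary terms at $|x|\to\infty$ harmless. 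Second, applying that principle requires classical regularity for $t>0$, which your bootstrap does supply (the Duhamel term with bounded source is Lipschitz in $x$, the nonlinearity is $C^{1}$, and Schauder theory finishes). For the record, the literature this theorem is drawn from typically proves the same statement via nonlinear semigroup theory (the operator $u\mapsto -u_{xx}+|u|^{p}u$ is $m$-accretive in the relevant space, so Crandall--Liggett gives a global contraction semigroup directly), which packages your a priori bound and uniqueness into one step; your fixed-point-plus-comparison argument is an equally valid and more elementary alternative.
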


The following result proves the existence of an initial value problem
solution associated with the generalized nonautonomous heat equation (\ref%
{GNLHparticular}).

\begin{proposition}
\label{existencia1} Suppose that the equations (\ref{sistemafinal11})-(\ref%
{sistemafinal12}) are satisfied with $c(t),$ $d(t),$ $f(t)\in
C^{1}([0,\infty ))$ and $c(t)>0,$ for all $t\in \lbrack 0,\infty ).$ If $%
u_{0}\in C_{0}(R),$ then there exists a solution $u\in C([0,\infty
),C_{0}(R))$ for the Cauchy problem 
\begin{align}
& u_{t}=u_{xx}-\left( g\left( t\right) -c\left( t\right) x\right)
u_{x}+\left( d\left( t\right) +f(t)x-(b(t)-c_{0})x^{2}\right)
u+h(x,t)\left\vert u\right\vert ^{p}u,  \label{prob1} \\
& u(x,0)=u_{0}(x)\mu ^{-1/2}(0)e^{\alpha (0)x^{2}+\delta (0)x+\kappa (0)},
\label{prob2}
\end{align}%
\noindent with $p>0,$ $x\in R,$ $t>0,$ when $h(x,t)$ has the form 
\begin{equation}
h(x,t)=-\mu ^{p/2}(t)\exp \left\{ -p(\alpha (t)x^{2}+\delta (t)x+\kappa
(t))\right\} .
\end{equation}
\end{proposition}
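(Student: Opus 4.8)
The plan is to reduce the Cauchy problem \eqref{prob1}--\eqref{prob2} for the nonautonomous equation to the autonomous nonlinear heat equation \eqref{NLH}, for which Theorem \ref{existencia} already supplies a unique global solution in $C([0,\infty),C_0(\mathbb{R}))$. First I would invoke Lemma \ref{resultadoparticular}: under the stated balance condition on $h(x,t)$ (which here is exactly the one in Lemma \ref{resultadoparticular} with $h_0=-1$ and $r_0=0$, so that \eqref{ecuacionmodeloparticular} becomes $v_t=v_{xx}-|v|^p v$, i.e.\ \eqref{NLH}), the substitution
\begin{equation*}
u(x,t)=\frac{1}{\sqrt{\mu(t)}}\,e^{\alpha(t)x^2+\delta(t)x+\kappa(t)}\,v(x,t)
\end{equation*}
carries solutions $v$ of \eqref{NLH} to solutions $u$ of \eqref{prob1}. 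Since \eqref{sistemafinal11}--\eqref{sistemafinal12} are assumed to hold with $c,d,f\in C^1([0,\infty))$, the coefficient functions $\alpha,\delta,\kappa,\mu$ are well defined and $C^1$ on $[0,\infty)$ by the explicit formulas \eqref{sistemafinal11}--\eqref{sistemafinal12}; in particular $\mu(t)=\mu(0)\exp\{\int_0^t(c(z)-2d(z))\,dz\}>0$ for all $t$, so the prefactor is smooth and never vanishes. Next I would check the initial data: evaluating the substitution at $t=0$ and using $\mu(0),\alpha(0),\delta(0),\kappa(0)$ shows that $u(x,0)=u_0(x)$ is equivalent to $v(x,0)=u_0(x)\mu^{-1/2}(0)e^{\alpha(0)x^2+\delta(0)x+\kappa(0)}$, which is the $v_0$ prescribed in \eqref{prob2}; this is exactly the form chosen in \eqref{prob2} so that the IVP for $u$ corresponds to the IVP for $v$ with datum $u_0$. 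Strictly, Theorem \ref{existencia} requires the $v$-datum to lie in $C_0(\mathbb{R})$, which it does since $u_0\in C_0(\mathbb{R})$ and the Gaussian-type factor $e^{\alpha(0)x^2+\delta(0)x+\kappa(0)}$ is continuous.

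The one genuine obstacle is the spatial decay/continuity of $u$ itself: a priori the factor $e^{\alpha(t)x^2+\cdots}$ could grow in $x$ and destroy membership in $C_0(\mathbb{R})$. Here is where the hypothesis $c(t)>0$ is used. From $\alpha(t)=-\tfrac14 c(t)$ we get $\alpha(t)<0$ for all $t\ge0$, so $e^{\alpha(t)x^2+\delta(t)x+\kappa(t)}\to0$ as $|x|\to\infty$ for each fixed $t$, and in fact this factor is bounded in $x$ on each compact time interval. Consequently, for every $t$, $u(\cdot,t)=\mu(t)^{-1/2}e^{\alpha(t)x^2+\delta(t)x+\kappa(t)}v(\cdot,t)$ is a product of a bounded function decaying at infinity with a function in $C_0(\mathbb{R})$, hence lies in $C_0(\mathbb{R})$. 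For joint continuity in $(x,t)$ one uses that $v\in C([0,\infty),C_0(\mathbb{R}))$ together with the continuity of $t\mapsto(\alpha(t),\delta(t),\kappa(t),\mu(t))$ and a uniform-on-compacts bound $\sup_{t\in[0,T]}\sup_{x}e^{\alpha(t)x^2+\delta(t)x+\kappa(t)}<\infty$ (valid precisely because $\alpha(t)\le-\tfrac14\min_{[0,T]}c<0$), which lets one pass from $C_0$-valued continuity of $v$ to $C_0$-valued continuity of $u$. This yields $u\in C([0,\infty),C_0(\mathbb{R}))$, completing the proof. I would present the argument in this order — (i) apply Lemma \ref{resultadoparticular} and record smoothness/positivity of the coefficient functions, (ii) match initial data and invoke Theorem \ref{existencia} to get $v$, (iii) use $c>0\Rightarrow\alpha<0$ to transfer the $C_0$-in-space and $C([0,\infty),C_0)$ regularity back to $u$ — with step (iii) being the only part requiring care.
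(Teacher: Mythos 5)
Your proposal is correct and is exactly the argument the paper intends: the paper in fact states Proposition \ref{existencia1} without proof, and the intended justification is precisely your composition of the transformation of Lemma \ref{resultadoparticular} (with $r_0=0$, $h_0=-1$) with Theorem \ref{existencia}, the only substantive point being that $c(t)>0$ gives $\alpha(t)=-\tfrac14 c(t)<0$ so the Gaussian prefactor preserves membership in $C_0(\mathbb{R})$. One small slip: the initial condition \eqref{prob2} already carries the factor $\mu^{-1/2}(0)e^{\alpha(0)x^2+\delta(0)x+\kappa(0)}$, so matching data gives simply $v(x,0)=u_0(x)\in C_0(\mathbb{R})$, rather than the datum you wrote for $v$; this does not affect the validity of the rest of your argument.
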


\bigskip 
\begin{table}[tbp]
\caption{Examples of variable coefficient Burgers equations with explicit
solutions.}{\small {\fontfamily{cmr10}%
\selectfont{
\begin{tabular}{|c|c|}
\hline
\# & \textbf{Generalized Burgers Equation} \\ \hline
1 & $v_{t}-(vv_{x}-v_{xx})=-\frac{1}{4}a_{0}b_{0}\exp \left\{ -\frac{a_{0}b_{0}t^{m+n+1}}{m+n+1}\right\}
t^{m+n}+(b_{0}mt^{m-1}-a_{0}b_{0}^{2}t^{2m+n})x$ \\ \hline
2 & $v_{t}-(vv_{x}-v_{xx})=\sin {t}$ \\ \hline
3 & $v_{t}-a_{0}t^{n}(vv_{x}-v_{xx})=\frac{1}{4}a_{0}c_{0}\exp \left\{ \frac{a_{0}c_{0}t^{n+1}}{n+1}\right\} t^{n}-a_{0}c_{0}^{2}t^{n}x$ \\ \hline
4 & $v_{t}-(vv_{x}-v_{xx})=(c_{0}kt^{k-1}-c_{0}^{2}t^{2k})x$ \\ \hline
5 & $v_{t}-a_{0}t^{n-1}(vv_{x}-v_{xx})=-2a_{0}b_{0}\exp \left\{ -\frac{a_{0}b_{0}t^{n}}{n}\right\} t^{n-1}-a_{0}b_{0}^{2}t^{n-1}x$ \\ \hline
6 & $v_{t}-(vv_{x}-v_{xx})=-\lambda _{0}^{2}x+\cosh {t}$ \\ \hline
7 & 
\begin{tabular}{lll}
$v_{t}-b_{0}\exp \left\{ \mu _{0}t\right\} (vv_{x}-v_{xx})$ & $=-2a_{0}b_{0}\exp \left\{ (\lambda _{0}+\mu _{0})t-\frac{a_{0}b_{0}\exp
\left\{ (\lambda _{0}+\mu _{0})t\right\} }{\lambda _{0}+\mu _{0}}\right\} $
&  \\ 
& $+(a_{0}\lambda _{0}\exp \left\{ \lambda _{0}t\right\} -a_{0}^{2}b_{0}\exp
\left\{ (2\lambda _{0}+\mu _{0})t\right\} )x$ & 
\end{tabular}
\\ \hline
8 & $v_{t}-a_{0}\exp \left\{ \lambda _{0}t\right\}
(vv_{x}-v_{xx})=(b_{0}nt^{n-1}-a_{0}b_{0}^{2}\exp \left\{ \lambda
_{0}t\right\} t^{2n})x$ \\ \hline
9 & $v_{t}-a_{0}t^{n}(vv_{x}-v_{xx})=(\lambda _{0}b_{0}\exp \left\{ \lambda
_{0}t\right\} -a_{0}b_{0}^{2}\exp \left\{ 2\lambda _{0}t\right\} t^{n})$ \\ 
\hline
10 & 
\begin{tabular}{lll}
$v_{t}+(\lambda _{0}-a_{0}\sinh ^{2}{\lambda _{0}t})(vv_{x}-v_{xx})$ & $=2\exp \left\{ -\frac{a_{0}\cosh {2\lambda _{0}t}}{4\lambda _{0}}\right\}
\cosh {\lambda _{0}t}(\lambda _{0}-a_{0}\sinh ^{2}{\lambda _{0}t})$ &  \\ 
& $+(\lambda _{0}-a_{0}-a_{0}\sinh ^{2}{\lambda _{0}t})x$ & 
\end{tabular}
\\ \hline
\end{tabular}\label{tabla}
}}\newline
\ }
\end{table}

\section{\protect\bigskip Explicit Solutions for a Generalized Burgers
Equation with variable coefficients}

In this Section, we study explicit solutions with multiparameters through a
new transformation (using Riccati systems) for the\ following generalized
Burgers equation with variable coefficient (\ref{GBE0}): 
\begin{equation}
v_{t}+4a(t)(vv_{x}-v_{xx})=-b(t)x+f(t).  \label{VCBE2}
\end{equation}

The case $a(t)=1/4$ and $f(t)=0$ was studied by Eule and Friedrich in \cite%
{Eule:Friedrich}. They also considered random terms. Our approach is similar
to \cite{Salas} and \ \cite{Buyukasik:pashaev} with the advantage of using
our Riccati system that allows multiparameters. As an application of our
multiparameter approach we present a new symmetry for Burgers equation, see
proposition 2. Abundant families can be produced with explicit solutions and
we present some examples on Tables 3 and 4.

\subsection{\protect\bigskip Classical solutions of the Burgers equation}

We recall some classical solutions of BE 
\begin{equation}
u_{\tau }+u_{\xi }u-u_{_{\xi \xi }}=0.  \label{BE}
\end{equation}

The following are well-known solutions for the equation (\ref{BE}) \cite%
{Buyukasik:pashaev}:

a) Shock solitary wave solution%
\begin{equation}
u(\xi ,\tau )=c-A\tanh \left[ \frac{A}{2}(\xi -c\tau +c_{0})\right]
\label{u7}
\end{equation}

b) Triangular wave solution 
\begin{equation}
u(\xi ,\tau )=\frac{1}{\sqrt{2\pi \tau }}\left( \frac{\left( e^{A}-1\right)
e^{-\xi ^{2}/2\tau }}{1+\frac{1}{2}\left( e^{A}-1\right) erfc[\xi /\sqrt{%
2\tau }]}\right)  \label{u8}
\end{equation}

corresponding to initial condition $u(\xi ,0)=A\delta (\xi ),$ $A$ is
constant, $\delta $ is the delta-Dirac and $erfc[a]=2/\sqrt{\pi }%
\int_{a}^{\infty }e^{-x^{2}}dx.$

\begin{table}[tbp]
\caption{Solutions of variable coefficient Burgers equations of Table 3 of
the form (\protect\ref{substitution2}). $u_{i}$ to be chosen from (\protect
\ref{u7})- (\protect\ref{u10}). }{\small {\fontfamily{cmr10}%
\selectfont{
\begin{tabular}{|c|c|}
\hline
\# & \textbf{Solutions of Burgers Equations} \\ \hline
1 & 
\begin{tabular}{l}
$\frac{1}{8}\exp \left\{ \frac{a_{0}b_{0}t^{m+n+1}}{m+n+1}\right\}
+b_{0}t^{m}x+\exp \left\{ \frac{a_{0}b_{0}t^{m+n+1}}{m+n+1}\right\} \times $
\\ 
$u_{i}\left[ \frac{a_{0}t^{n+1}}{8(n+1)}\exp \left\{ \frac{a_{0}b_{0}t^{m+n+1}}{m+n+1}\right\} x,\right. \left. \frac{2^{-\frac{n+1}{m+n+1}}a_{0}t^{n+1}\left( -\frac{a_{0}b_{0}t^{m+n+1}}{m+n+1}\right) ^{-\frac{n+1}{m+n+1}}\Gamma \left[ \frac{n+1}{m+n+1},-\frac{2a_{0}b_{0}t^{m+n+1}}{m+n+1}\right] }{m+n+1}\right] $\end{tabular}
\\ \hline
2 & $-\frac{x}{t}+\frac{\sin {t}-t\cos {t}}{t}+\frac{1}{t}u_{i}\left[ \frac{x}{t}-\frac{\sin {t}}{t},\frac{1}{t}\right] $ \\ \hline
3 & 
\begin{tabular}{l}
$\frac{1}{8}\exp \left\{ \frac{a_{0}c_{0}t^{n+1}}{n+1}\right\} -c_{0}x+\exp
\left\{ -\frac{a_{0}c_{0}t^{n+1}}{n+1}\right\} \times $ \\ 
$u_{i}\left[ \frac{a_{0}t^{n+1}}{8(n+1)}+\exp \left\{ -\frac{a_{0}c_{0}t^{n+1}}{n+1}\right\} x,\frac{\exp \left\{ -2\frac{a_{0}c_{0}t^{n+1}}{n+1}\right\} }{2c_{0}}\right] $\end{tabular}
\\ \hline
4 & $c_{0}t^{k}x+\exp \left\{ \frac{c_{0}t^{k+1}}{k+1}\right\} \times u_{i}\left[ \exp \left\{ \frac{c_{0}t^{k+1}}{k+1}\right\} x,\frac{2^{-\frac{1}{k+1}}t\left( -\frac{c_{0}t^{k+1}}{k+1}\right) ^{-\frac{1}{k+1}}\Gamma \left[ 
\frac{1}{k+1},-\frac{2c_{0}t^{k+1}}{k+1}\right] }{k+1}\right] $ \\ \hline
5 & $\exp \left\{ -\frac{a_{0}b_{0}t^{n}}{n}\right\} +b_{0}x+\exp \left\{ 
\frac{a_{0}b_{0}t^{n}}{n}\right\} u_{i}\left[ \frac{a_{0}t^{n}}{n}+\exp
\left\{ \frac{a_{0}b_{0}t^{n}}{n}\right\} x,-\frac{\exp \left\{ 2\frac{a_{0}b_{0}t^{n}}{n}\right\} }{2b_{0}}\right] $ \\ \hline
6 & $-\lambda _{0}x+\frac{\lambda _{0}\cosh {t}-\sinh {t}}{\lambda _{0}^{2}-1}+\exp \left\{ -\lambda _{0}t\right\} u_{i}\left[ \exp \left\{ -\lambda
_{0}t\right\} x-\frac{\exp \left\{ -\lambda _{0}t\right\} \cosh {t}}{\lambda
_{0}^{2}-1},\frac{\exp \left\{ -2\lambda _{0}t\right\} }{2\lambda _{0}}\right] $ \\ \hline
7 & 
\begin{tabular}{l}
$\exp \left\{ -\frac{a_{0}b_{0}\exp \left\{ (\lambda _{0}+\mu _{0})t\right\} 
}{\lambda _{0}+\mu _{0}}\right\} +a_{0}\exp \left\{ \lambda _{0}t\right\}
x+\exp \left\{ \frac{a_{0}b_{0}\exp \left\{ (\lambda _{0}+\mu _{0})t\right\} 
}{\lambda _{0}+\mu _{0}}\right\} \times $ \\ 
$u_{i}\left[ \exp \left\{ -\frac{a_{0}b_{0}\exp \left\{ (\lambda _{0}+\mu
_{0})t\right\} }{\lambda _{0}+\mu _{0}}\right\} x+\frac{b_{0}\exp \left\{
\mu _{0}t\right\} }{\mu _{0}},-b_{0}\int \exp \left\{ \mu _{0}t+2\frac{a_{0}b_{0}\exp \left\{ (\lambda _{0}+\mu _{0})t\right\} }{\lambda _{0}+\mu
_{0}}\right\} dt\right] $\end{tabular}
\\ \hline
8 & 
\begin{tabular}{l}
$b_{0}t^{n}x+\exp \left\{ -a_{0}b_{0}t^{n+1}(-\lambda _{0}t)^{-(n+1)}\Gamma \left[ n+1,-\lambda _{0}t\right] \right\} \times $ \\ 
$u_{i}\left[ \exp \left\{ -a_{0}b_{0}t^{n+1}(-\lambda _{0}t)^{-(n+1)}\Gamma \left[ n+1,-\lambda _{0}t\right] \right\} x,\right. $ \\ 
$\left. a_{0}\int \exp \left\{ \lambda _{0}t-a_{0}b_{0}t^{n+1}(-\lambda
_{0}t)^{-(n+1)}\Gamma \left[ n+1,-\lambda _{0}t\right] \right\} dt\right] $\end{tabular}
\\ \hline
9 & 
\begin{tabular}{l}
$b_{0}\exp \left\{ \lambda _{0}t\right\} x+\exp \left\{
-a_{0}b_{0}t^{n+1}(-\lambda _{0}t)^{-(n+1)}\Gamma \left[ n+1,-\lambda _{0}t\right] \right\} \times $ \\ 
$u_{i}\left[ \exp \left\{ -a_{0}b_{0}t^{n+1}(-\lambda _{0}t)^{-(n+1)}\Gamma \left[ n+1,-\lambda _{0}t\right] \right\} x,\right. $ \\ 
$\left. -a_{0}\int \exp \left\{ -2a_{0}b_{0}t^{n+1}(-\lambda
_{0}t)^{-(n+1)}\Gamma \left[ n+1,-\lambda _{0}t\right] \right\} t^{n}dt\right] $\end{tabular}
\\ \hline
10 & 
\begin{tabular}{l}
$\exp \left\{ -\frac{a_{0}\cosh {2\lambda _{0}t}}{4\lambda _{0}}\right\}
\sinh {\lambda _{0}t}+\cosh {\lambda _{0}t}x+\exp \left\{ \frac{a_{0}\cosh {2\lambda _{0}t}}{4\lambda _{0}}\right\} \csch{\lambda_{0}t}\times $ \\ 
$u_{i}\left[ \exp \left\{ \frac{a_{0}\cosh {2\lambda _{0}t}}{4\lambda _{0}}\right\} \csch{\lambda_{0}}x+2\left( \frac{a_{0}\sinh {2\lambda _{0}t}}{8\lambda _{0}}-\frac{a_{0}t}{4}-\frac{\lambda _{0}t}{2}\right) ,\right. $ \\ 
$\left. \int \exp \left\{ \frac{a_{0}\cosh {2\lambda _{0}t}}{4\lambda _{0}}\right\} \csch^{2}{\lambda _{0}t}(\lambda _{0}-a_{0}\sinh ^{2}{\lambda _{0}t})dt\right] $\end{tabular}
\\ \hline
\end{tabular}\label{tabla}
}}\newline
\ }
\end{table}

\bigskip

c) N-wave solution%
\begin{equation}
u(\xi ,\tau )=\frac{\xi }{\tau }\left( \frac{\sqrt{a/\tau }e^{-\xi
^{2}/4\tau }}{1+\sqrt{a/\tau }e^{-\xi ^{2}/4\tau }}\right) ,\tau >0.
\label{u9}
\end{equation}

d) Quotient of Kampè de Feriet polynomials%
\begin{equation}
u_{k}(\xi ,\tau )=-\frac{\sum_{m=1}^{k}ma_{m}H_{m-1}(\xi ,\tau /2)}{%
\sum_{m=0}^{k}a_{m}H_{m}(\xi ,\tau /2)},k=1,2,3,....  \label{u10}
\end{equation}

\bigskip where Kampè de Feriet polynomials are defined by 
\begin{equation}
H_{m}(\xi ,\tau /2)=m!\sum_{k=0}^{[m/2]}\frac{\left( \tau /2\right) ^{k}}{%
k!(m-2k)!}\xi ^{m-2k},\qquad H_{m}(\xi ,0)=\xi ^{m},  \label{Kampe Poly}
\end{equation}

with $[m/2]=m/2$ for even $m,$ and $[m/2]=(m-1)/2$ for odd $m,$ see \cite%
{Buyukasik:pashaev}.\bigskip {\small \ }

\subsection{Soliton solutions for VCBE (\protect\ref{VCBE2}) through Riccati
equations and similarity transformations}

\begin{proposition}
\bigskip If the following second-order differential equation can be solved
explicitly\ 
\begin{equation}
\mu ^{\prime \prime }(t)-\frac{a^{\prime }(t)}{a(t)}\mu ^{\prime
}(t)+4a(t)b(t)\mu (t)=0,  \label{char equ}
\end{equation}%
\begin{equation}
\mu _{0}\left( 0\right) =0,\quad \mu _{0}^{\prime }\left( 0\right) =2a\left(
0\right) \neq 0\qquad \mu _{1}\left( 0\right) \neq 0,\quad \mu _{1}^{\prime
}\left( 0\right) =0,  \label{standarddata}
\end{equation}

then the Cauchy initial value problem for the generalized Burgers equation%
\begin{equation}
v_{t}+4a(t)(vv_{x}+Lv_{xx})=-b(t)x+f(t)  \label{GBE}
\end{equation}

can be reduced to the standard Burgers equation%
\begin{equation}
u_{\tau }+Lu_{\xi \xi }+u_{\xi }u=0  \label{BE1}
\end{equation}

through the \textbf{multiparameter} substitution%
\begin{equation}
v(x,t)=\alpha (t)x+\delta (t)+\beta (t)u(\xi ,\tau ),  \label{substitution2}
\end{equation}

where $\xi =\beta (t)x+2\varepsilon (t)$ and $\tau (t)=4\gamma (t)$ and $\mu
,\alpha ,\beta ,\gamma ,\delta $ and $\varepsilon $ are given explicitly by (%
\ref{MKernel})-(\ref{ap8}) with%
\begin{equation}
\alpha _{0}\left( t\right) =-\frac{1}{4a\left( t\right) }\frac{\mu
_{0}^{\prime }\left( t\right) }{\mu _{0}\left( t\right) },\quad \beta
_{0}\left( t\right) =\frac{1}{\mu _{0}\left( t\right) },\quad \gamma
_{0}\left( t\right) =-\frac{1}{2\mu _{1}\left( 0\right) }\frac{\mu
_{1}\left( t\right) }{\mu _{0}\left( t\right) },  \label{alpha0BE}
\end{equation}%
\begin{equation}
\delta _{0}\left( t\right) =\frac{1}{\mu _{0}\left( t\right) }\ \
\int_{0}^{t}f\left( s\right) \mu _{0}\left( s\right) \ ds,
\end{equation}%
\begin{align}
\varepsilon _{0}\left( t\right) & =-\frac{2a\left( t\right) \delta
_{0}\left( t\right) }{\mu _{0}^{\prime }\left( t\right) }-8\int_{0}^{t}\frac{%
a^{2}\left( s\right) b(s)}{\left( \mu _{0}^{\prime }\left( s\right) \right)
^{2}}\left( \mu _{0}\left( s\right) \delta _{0}\left( s\right) \right) \ ds
\\
& \quad +2\int_{0}^{t}\frac{a\left( s\right) f(s)}{\mu _{0}^{\prime }\left(
s\right) }\ ds,  \notag
\end{align}%
\begin{align}
\kappa _{0}\left( t\right) & =-\frac{a\left( t\right) \mu _{0}\left(
t\right) }{\mu _{0}^{\prime }\left( t\right) }\delta _{0}^{2}\left( t\right)
-4\int_{0}^{t}\frac{a^{2}\left( s\right) b(s)}{\left( \mu _{0}^{\prime
}\left( s\right) \right) ^{2}}\left( \mu _{0}\left( s\right) \delta
_{0}\left( s\right) \right) ^{2}\ ds  \label{kappa0BE} \\
& \quad +2\int_{0}^{t}\frac{a\left( s\right) }{\mu _{0}^{\prime }\left(
s\right) }\left( \mu _{0}\left( s\right) \delta _{0}\left( s\right) \right)
f\left( s\right) \ ds  \notag
\end{align}%
with $\delta _{0}\left( 0\right) =g_{0}\left( 0\right) /\left( 2a\left(
0\right) \right) ,$ $\varepsilon _{0}\left( 0\right) =-\delta _{0}\left(
0\right) ,$ $\kappa _{0}\left( 0\right) =0.$ As an application, we can
present the following symmetry for the Burgers equation. If $u$ \ is a
solution of classical Burgers equation (\ref{u7})-(\ref{u10}), then (\ref%
{substitution2}) becomes 
\begin{eqnarray}
v(x,t) &=&\left( -\frac{1}{2t}-\frac{1}{t\left( \alpha \left( 0\right)
t-1\right) }\right) x-\frac{\delta \left( 0\right) }{\alpha \left( 0\right)
t-1}  \notag \\
&&-\frac{\beta \left( 0\right) }{\alpha \left( 0\right) t-1}u\left( -\frac{%
\beta \left( 0\right) }{\alpha \left( 0\right) t-1}x+\varepsilon \left(
0\right) -\frac{t\beta \left( 0\right) \delta \left( 0\right) }{2\left(
\alpha \left( 0\right) t-1\right) },\gamma \left( 0\right) -\frac{\beta
^{2}\left( 0\right) t}{4\left( \alpha \left( 0\right) t-1\right) }\right) ,
\end{eqnarray}%
and it is is also a (multiparameter) solution with the arbitrary data $%
\alpha \left( 0\right) ,$ $\beta \left( 0\right) \neq 0,$ $\gamma (0),$ $%
\delta (0),$ $\varepsilon (0)$ and $\kappa (0)$
\end{proposition}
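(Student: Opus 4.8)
The plan is to run the scheme of the proof of Lemma~1, adapted to the Burgers nonlinearity, and then to specialize the transformation so obtained. First I would substitute the ansatz (\ref{substitution2}) into (\ref{GBE}). Writing $\xi=\beta(t)x+2\varepsilon(t)$ and $\tau=4\gamma(t)$, the derivatives one needs are
\begin{align*}
v_x&=\alpha+\beta^{2}u_\xi,\qquad v_{xx}=\beta^{3}u_{\xi\xi},\\
vv_x&=\alpha^{2}x+\alpha\delta+\alpha\beta u+\alpha\beta^{2}x\,u_\xi+\beta^{2}\delta\,u_\xi+\beta^{3}u\,u_\xi,\\
v_t&=\alpha'x+\delta'+\beta'u+\beta(\beta'x+2\varepsilon')u_\xi+4\beta\gamma'u_\tau .
\end{align*}
Inserting these into $v_t+4a(vv_x+Lv_{xx})=-bx+f$ and collecting, the terms carrying $u_\tau$, $u\,u_\xi$ and $u_{\xi\xi}$ have coefficients $4\beta\gamma'$, $4a\beta^{3}$ and $4aL\beta^{3}$, hence stand in the ratio $1:1:L$ exactly when $\gamma'=a\beta^{2}$, in which case they cancel against one another by virtue of (\ref{BE1}). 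The remaining ``static'' terms must then reproduce $-bx+f$, which forces
\begin{gather*}
\alpha'+4a\alpha^{2}=-b,\qquad \beta'+4a\alpha\beta=0,\qquad \gamma'=a\beta^{2},\\
\delta'+4a\alpha\delta=f,\qquad \varepsilon'=-2a\beta\delta ,
\end{gather*}
the coefficient of $u$ and that of $x\,u_\xi$ giving the same $\beta$-equation, and $\kappa$ not entering (\ref{substitution2}) at all but being carried along for uniformity with the appendix.

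Next I would solve this system in closed form. The Riccati equation for $\alpha$ is linearized by the classical substitution through a solution of (\ref{char equ}); taking the distinguished solution $\mu_0$ with the normalization (\ref{standarddata}) yields the formulas for $\alpha_0$ and $\beta_0=1/\mu_0$ in (\ref{alpha0BE}). For $\gamma_0$ one integrates $\gamma_0'=a\beta_0^{2}=a/\mu_0^{2}$ with the help of Abel's formula: under (\ref{standarddata}) the Wronskian of the two solutions of (\ref{char equ}) satisfies $W[\mu_0,\mu_1](t)=-2a(t)\mu_1(0)$, and from this one checks directly that the expression for $\gamma_0$ in (\ref{alpha0BE}) has the right derivative. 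The equation $\delta_0'+4a\alpha_0\delta_0=f$ is linear with integrating factor $\mu_0$, so $\delta_0=\mu_0^{-1}\int_0^t f\mu_0\,ds$, and $\varepsilon_0$ (and the spectator $\kappa_0$) then follow by one further quadrature apiece, fixed by the stated initial data at $t=0$; this produces (\ref{alpha0BE})--(\ref{kappa0BE}), and the general multiparameter solution is assembled from the pair $(\mu_0,\mu_1)$ exactly as in the appendix formulas (\ref{MKernel})--(\ref{kappa0}), keeping the free constants $\alpha(0),\beta(0)\neq0,\gamma(0),\delta(0),\varepsilon(0),\kappa(0)$. With these choices the left side of (\ref{GBE}) becomes $4\beta\gamma'\,[\,u_\tau+Lu_{\xi\xi}+u\,u_\xi\,]-bx+f$, which equals $-bx+f$ whenever $u=u(\xi,\tau)$ solves (\ref{BE1}); hence $v$ given by (\ref{substitution2}) solves (\ref{GBE}), which is the first assertion.

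For the symmetry statement I would specialize $a$ to a constant and take $b\equiv0$, $f\equiv0$, so that (\ref{GBE}) is literally the standard Burgers equation (\ref{BE1}). Then (\ref{char equ}) becomes $\mu''=0$ and the normalized solutions are $\mu_0(t)=2a(0)t$ and $\mu_1(t)=\mu_1(0)$; feeding these into the general solution of the previous paragraph gives $\alpha(t)=-\tfrac{1}{2t}-\tfrac{1}{t(\alpha(0)t-1)}$, $\beta(t)=-\beta(0)/(\alpha(0)t-1)$, $\delta(t)=-\delta(0)/(\alpha(0)t-1)$ and the matching $\varepsilon$ and $\gamma$, and substituting them into (\ref{substitution2}) produces the displayed map. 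By the first part of the proposition this map carries solutions of (\ref{BE1}) to solutions of (\ref{GBE}), and here (\ref{GBE}) and (\ref{BE1}) coincide, so it sends Burgers solutions to Burgers solutions; applying it to the classical profiles (\ref{u7})--(\ref{u10}) gives the stated multiparameter families, with the six initial constants remaining free.

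I expect the genuine work, as opposed to bookkeeping, to be of two kinds. The first is the term-by-term collection: it must be carried out carefully enough to verify that the overdetermined-looking identity is in fact consistent --- consistency is precisely where a stray sign or factor of $4$ shows up, and it is what pins down $\gamma'=a\beta^{2}$ and hence the rescaling $\tau=4\gamma$. The second is checking the closed forms (\ref{alpha0BE})--(\ref{kappa0BE}), above all the nested integrals that define $\varepsilon_0$ and $\kappa_0$, which leans repeatedly on the Wronskian identity $W[\mu_0,\mu_1]=-2a\mu_1(0)$ and on $\mu_0'(0)=2a(0)$. Beyond that, everything is a direct transcription of the Lemma~1 computation.
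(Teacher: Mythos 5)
Your proposal follows the same route as the paper's proof: substitute the ansatz (\ref{substitution2}) with $\xi=\beta x+2\varepsilon$, $\tau=4\gamma$, collect the coefficients of $x$, $1$, $u$, $u_\xi$, $xu_\xi$, $u_\tau$, $uu_\xi$, $u_{\xi\xi}$ to force exactly the Riccati system (\ref{alpha(t)})--(\ref{epsilon(t)}), and linearize the $\alpha$-equation through $\mu$ to arrive at the characteristic equation (\ref{char equ}); your derivative computations and the observation that the $u$- and $xu_\xi$-coefficients give the same $\beta$-equation match the paper's. In fact your write-up goes a bit further than the paper's own proof, which stops once (\ref{char equ}) is obtained and simply defers the closed forms (\ref{alpha0BE})--(\ref{kappa0BE}) and the symmetry specialization to the appendix, whereas you sketch both (including the Wronskian identity $W[\mu_0,\mu_1]=-2a(t)\mu_1(0)$ needed for $\gamma_0$).
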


\textbf{Proof:} We look for a solution of the form (\ref{substitution2}).
Using substitution (\ref{substitution2})\ in (\ref{GBE}) we obtain

\begin{eqnarray*}
v_{t} &=&\dot{\alpha}x+\dot{\delta}+\dot{\beta}u+\beta \lbrack u_{\xi }(\dot{%
\beta}x+2\dot{\varepsilon})+4\dot{\tau}u_{\tau }], \\
aLv_{xx} &=&a\beta ^{3}Lu_{\xi \xi }, \\
avv_{x} &=&a\alpha ^{2}x+a\alpha \beta ^{2}xu_{\xi }+a\delta \alpha +a\delta
\beta ^{2}u_{\xi }+a\beta \alpha u+a\beta ^{3}u_{\xi }u.
\end{eqnarray*}

Forcing $a,$ $b$ and $f$ to satisfy the Riccati system

\begin{align}
& \frac{d\alpha }{dt}+b+4a\alpha ^{2}=0,  \label{alpha(t)} \\
& \frac{d\beta }{dt}+4a\alpha \beta =0,  \label{beta(t)} \\
& \frac{d\gamma }{dt}=a\beta ^{2},  \label{gamma(t)} \\
& \frac{d\delta }{dt}+4a\alpha \delta =f,  \label{delta(t)} \\
& \frac{d\varepsilon }{dt}+2a\delta \beta =0,  \label{epsilon(t)}
\end{align}

equation (\ref{GBE}) would be reduced to\ the standard BE (\ref{BE1}).
Making the substitution

\begin{equation}
\frac{\mu ^{\prime }}{2\mu }+2a\alpha =0  \label{mu(t)}
\end{equation}%
into (\ref{alpha(t)}) we obtain (\ref{char equ}).

\begin{remark}
The solutions presented in Table 4 for the families in Table 3 are of the
form (\ref{substitution2}) where for convenience of the presentation $\alpha
,\beta ,\gamma ,\delta $ and $\varepsilon $ satisifies (\ref{alpha0})-(\ref%
{kappa0}), but we would like to emphasize that the most general
multiparamters solutions are given by (\ref{substitution2}) and $\alpha
,\beta ,\gamma ,\delta $ and $\varepsilon $ satisfies (\ref{AKernel})-(\ref%
{kappa0}).
\end{remark}

\textbf{Conclusion.} \emph{The Riccati equations have played an important
role in explicit solutions for Fisher and Burgers equations and other
nonlinear PDES (see \cite{Feng}, \cite{Gomez}, \cite{Lan:Lop:Sus}, \cite%
{Sopho}, \cite{Suazo:Sus:Ve}, \cite{Suazo:Sus:Ve2} and references therein).
Also, solutions for Ermakov systems have been a very useful tool in the
study of dispersive equations (see \cite{Escorcia} for an extended
bibliography) including applications to 1D Bose-Einstein condensate. In this
paper, in order to obtain the main results, we use a fundamental approach
consisting of the use of similarity transformations and the solutions of
Riccati-Ermakov systems with several parameters for the diffusion case. In
general, Fisher-KPP and Burgers equations type with variable coefficients
(vcBE) are not integrable so similarity transformations and other methods
have been extensively applied \cite{Moreau}, \cite{Orlo}, \cite{Salas}, \cite%
{Sopho}, \cite{Xu} and \cite{Eule:Friedrich}, \cite{Sakai} and \cite{Zola}.}

\emph{Also, in this work, inspired by the work of Mahric \cite{Marhic78} on
multiparameter solutions for the linear Schrödinger equation with quadratic
potential, we have established a relationship between solutions with
parameters of Riccati-Ermakov systems and the dynamics of a Fisher-KPP and
Burgers-type equations with variable coefficients. In fact, as a first
application of this multiparameter approach we present explicit solutions
with singularities for selected coefficients using Riccati systems and
similarity transformations from standard solutions of Fisher-KPP and Burgers
equations, see Tables 1-4 for a list of several families. For a generalized
Fisher-KPP, we can obtain traveling wave and rational solutions. And for a
generalized Burgers equation, we can obtain triangular wave and N-wave type
solutions. We explore the dynamics of these solutions across
multi-parameters by means of Riccati-Ermakov systems (see Figures 1-3) as in
previous works for the paraxial wave equation \cite{Ma:Suslov} and nonlinear
Schrödinger equation in \cite{Escorcia}. This work should motivate further
analytical and numerical studies looking to clarify the connections between
the dynamics of variable-coefficient NLS and the solutions of ordinary
differential Riccati-Ermakov systems. }

\emph{In this work we also generalized the transformation for a vcBE
presented in \cite{Sopho} (see our proposition 2) where Langevin equations
and the Hill equation were used to express the transformation. Instead of
these approaches, in proposition 2 we will use what we have called the
Riccati system, extending the results presented in \cite{Buyukasik:pashaev}
and \cite{Salas}; further, our solutions will show multiparameters and we
present a new symmetry for Burgers equation. Tables 3 and 4 shows several
examples of families with explicit solutions.}

\emph{Finally, we have prepared a Mathematica file as supplemental material
verifying the solutions. Also, all the formulas from the appendix have been
verified previously \cite{Suazo:Sus:Ve2}.}

\begin{acknowledgement}
One of the authors (ES) was supported by the Simons Foundation Grant \#
316295, by NSF grant DMS\#1620196 and College of Sciences Research
Enhancement Seed Grants Program at UTRGV.
\end{acknowledgement}

\section{Appendix: Riccati and Ermakov Systems (diffusion case)}

\bigskip The following result can be found at \cite{Suazo:Sus:Ve2}.

\begin{lem}
(Solutions for the Riccati system.) The system of ordinary differential
equations (\ref{Ermakov01})-(\ref{Ermakov06}) with $c_{0}=0$ has the
following multiparameter general solution: 
\begin{eqnarray}
&&\mu \left( t\right) =-2\mu \left( 0\right) \mu _{0}\left( t\right) \left(
\alpha \left( 0\right) +\gamma _{0}\left( t\right) \right) ,  \label{MKernel}
\\
&&\alpha \left( t\right) =\alpha _{0}\left( t\right) -\frac{\beta
_{0}^{2}\left( t\right) }{4\left( \alpha \left( 0\right) +\gamma _{0}\left(
t\right) \right) },  \label{AKernel} \\
&&\beta \left( t\right) =-\frac{\beta \left( 0\right) \beta _{0}\left(
t\right) }{2\left( \alpha \left( 0\right) +\gamma _{0}\left( t\right)
\right) }=\frac{\beta \left( 0\right) \mu \left( 0\right) }{\mu \left(
t\right) }\lambda \left( t\right) ,  \label{BKernel} \\
&&\gamma \left( t\right) =\gamma \left( 0\right) -\frac{\beta ^{2}\left(
0\right) }{4\left( \alpha \left( 0\right) +\gamma _{0}\left( t\right)
\right) },  \label{CKernel}
\end{eqnarray}%
\begin{eqnarray}
\delta \left( t\right) &=&\delta _{0}\left( t\right) -\frac{\beta _{0}\left(
t\right) \left( \delta \left( 0\right) +\varepsilon _{0}\left( t\right)
\right) }{2\left( \alpha \left( 0\right) +\gamma _{0}\left( t\right) \right) 
},  \label{ap6} \\
\varepsilon \left( t\right) &=&\varepsilon \left( 0\right) -\frac{\beta
\left( 0\right) \left( \delta \left( 0\right) +\varepsilon _{0}\left(
t\right) \right) }{2\left( \alpha \left( 0\right) +\gamma _{0}\left(
t\right) \right) },  \label{ap7} \\
\kappa \left( t\right) &=&\kappa \left( 0\right) +\kappa _{0}\left( t\right)
-\frac{\left( \delta \left( 0\right) +\varepsilon _{0}\left( t\right)
\right) ^{2}}{4\left( \alpha \left( 0\right) +\gamma _{0}\left( t\right)
\right) }  \label{ap8}
\end{eqnarray}%
with the arbitrary data $\mu \left( 0\right) ,$ $\alpha \left( 0\right) ,$ $%
\beta \left( 0\right) \neq 0,$ $\gamma (0),$ $\delta (0),$ $\varepsilon (0)$
y $\kappa (0)$ where $\alpha _{0},\beta _{0},\gamma _{0},\delta
_{0},\epsilon _{0}$ and $\kappa _{0}$ are given explicitly for: 
\begin{equation}
\alpha _{0}\left( t\right) =-\frac{1}{4a\left( t\right) }\frac{\mu
_{0}^{\prime }\left( t\right) }{\mu _{0}\left( t\right) }-\frac{d\left(
t\right) }{2a\left( t\right) },  \label{alpha0}
\end{equation}%
\begin{equation}
\beta _{0}\left( t\right) =\frac{\lambda \left( t\right) }{\mu _{0}\left(
t\right) },\quad \lambda \left( t\right) =\exp \left( \int_{0}^{t}\left(
c\left( s\right) -2d\left( s\right) \right) \ ds\right) ,
\end{equation}%
\begin{align}
\gamma _{0}\left( t\right) & =\frac{d\left( 0\right) }{2a\left( 0\right) }-%
\frac{a\left( t\right) \lambda ^{2}\left( t\right) }{\mu _{0}\left( t\right)
\mu _{0}^{\prime }\left( t\right) }-4\int_{0}^{t}\frac{a\left( s\right)
\sigma \left( s\right) \lambda \left( s\right) }{\left( \mu _{0}^{\prime
}\left( s\right) \right) ^{2}}\ ds \\
& =\frac{d\left( 0\right) }{2a\left( 0\right) }-\frac{1}{2\mu _{1}\left(
0\right) }\frac{\mu _{1}\left( t\right) }{\mu _{0}\left( t\right) },
\end{align}%
\begin{equation}
\delta _{0}\left( t\right) =\frac{\lambda \left( t\right) }{\mu _{0}\left(
t\right) }\ \ \int_{0}^{t}\left[ \left( f\left( s\right) +\frac{d\left(
s\right) }{a\left( s\right) }g\left( s\right) \right) \mu _{0}\left(
s\right) +\frac{g\left( s\right) }{2a\left( s\right) }\mu _{0}^{\prime
}\left( s\right) \right] \ \frac{ds}{\lambda \left( s\right) },
\end{equation}%
\begin{align}
\varepsilon _{0}\left( t\right) & =-\frac{2a\left( t\right) \lambda \left(
t\right) }{\mu _{0}^{\prime }\left( t\right) }\delta _{0}\left( t\right)
-8\int_{0}^{t}\frac{a\left( s\right) \sigma \left( s\right) \lambda \left(
s\right) }{\left( \mu _{0}^{\prime }\left( s\right) \right) ^{2}}\left( \mu
_{0}\left( s\right) \delta _{0}\left( s\right) \right) \ ds  \label{Epsilon0}
\\
& \quad +2\int_{0}^{t}\frac{a\left( s\right) \lambda \left( s\right) }{\mu
_{0}^{\prime }\left( s\right) }\left[ f\left( s\right) +\frac{d\left(
s\right) }{a\left( s\right) }g\left( s\right) \right] \ ds,  \notag
\end{align}%
\begin{align}
\kappa _{0}\left( t\right) & =-\frac{a\left( t\right) \mu _{0}\left(
t\right) }{\mu _{0}^{\prime }\left( t\right) }\delta _{0}^{2}\left( t\right)
-4\int_{0}^{t}\frac{a\left( s\right) \sigma \left( s\right) }{\left( \mu
_{0}^{\prime }\left( s\right) \right) ^{2}}\left( \mu _{0}\left( s\right)
\delta _{0}\left( s\right) \right) ^{2}\ ds  \label{kappa0} \\
& \quad +2\int_{0}^{t}\frac{a\left( s\right) }{\mu _{0}^{\prime }\left(
s\right) }\left( \mu _{0}\left( s\right) \delta _{0}\left( s\right) \right) %
\left[ f\left( s\right) +\frac{d\left( s\right) }{a\left( s\right) }g\left(
s\right) \right] \ ds  \notag
\end{align}%
with $\sigma =ab+cd-d^{2}+d/2(\dot{a}/a-\dot{d}/d),$ $\delta _{0}\left(
0\right) =g_{0}\left( 0\right) /\left( 2a\left( 0\right) \right) ,$ $%
\varepsilon _{0}\left( 0\right) =-\delta _{0}\left( 0\right) ,$ $\kappa
_{0}\left( 0\right) =0.$ Here $\mu _{0}$ and $\mu _{1}$ represent the
fundamental solution of the characteristic equation (\ref{Ermakov-P})
subject to the initial conditions $\mu _{0}(0)=0$, $\mu _{0}^{\prime
}(0)=2a(0)\neq 0$ and $\mu _{1}(0)\neq 0$, $\mu _{1}^{\prime }(0)=0$.{\small %
\ }

\begin{remark}
We can easily replace equation (\ref{gamma}) by 
\begin{equation}
\dot{\gamma}(t)+l_{0}a(t)\beta ^{2}(t)=0,\quad l_{0}=\pm 1,
\end{equation}%
with solution 
\begin{equation}
\gamma \left( t\right) =l_{0}\gamma \left( 0\right) -\frac{l_{0}\beta
^{2}\left( 0\right) }{4\left( \alpha \left( 0\right) +\gamma _{0}\left(
t\right) \right) },\quad l_{0}=\pm 1,
\end{equation}%
obtaining a larger variety of solutions.
\end{remark}
\end{lem}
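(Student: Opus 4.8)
The plan is to exploit the \emph{triangular} (cascade) structure that the system (\ref{Ermakov01})--(\ref{Ermakov06}) acquires when $c_0=0$. The equation (\ref{Ermakov01}) for $\alpha$ then decouples and is a scalar Riccati equation; once $\alpha$ is known, the equation for $\beta$ is linear and homogeneous with coefficient $c+4a\alpha$; $\gamma$ is obtained from (\ref{gamma}) by a single quadrature of $a\beta^2$; the equation for $\delta$ is linear and \emph{inhomogeneous} with the same homogeneous kernel as $\beta$ and forcing $f-2\alpha g$; and finally $\varepsilon$ and the equation (\ref{Ermakov06}) for $\kappa$ are pure quadratures in $\delta$ and $\beta$. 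I would therefore integrate the system stage by stage, carrying one constant of integration at each stage, and identify these six constants with the prescribed initial data $\alpha(0),\beta(0),\gamma(0),\delta(0),\varepsilon(0),\kappa(0)$ (with $\mu(0)$ serving only to normalise $\mu$). Since the resulting family carries exactly the six free parameters matching the order of the system, the Picard--Lindel\"of uniqueness theorem guarantees it is the general solution.

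First I would linearise the Riccati equation (\ref{Ermakov01}). Using the substitution $\alpha=\frac{1}{4a}\frac{\mu'}{\mu}-\frac{d}{2a}$ already introduced in the proof of Lemma~1, equation (\ref{Ermakov01}) with $c_0=0$ becomes the homogeneous linear second-order equation (\ref{Ermakov-P}) (with $c_0=0$). Writing the general solution as a combination of the fundamental pair $\mu_0,\mu_1$ fixed by the stated initial conditions, and matching the two free constants against $\mu(0)$ and $\alpha(0)$, produces the closed form (\ref{MKernel}); dividing it back through the substitution yields (\ref{AKernel}), in which the base function $\alpha_0$ of (\ref{alpha0}) is precisely the value obtained from the choice $\mu=\mu_0$. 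The essential feature to record here is the \emph{fractional-linear} dependence on the parameter, through the factor $1/\bigl(\alpha(0)+\gamma_0(t)\bigr)$; this same factor reappears in every subsequent formula and is the signature of the nonlinear superposition principle for the Riccati equation.

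Next I would run the cascade. For $\beta$ I would use the integrating factor $\exp\int_0^t(c+4a\alpha)\,ds$; since $c+4a\alpha_0=\beta_0'/\beta_0$ with $\beta_0=\lambda/\mu_0$, this integrating factor can be expressed through $\beta_0$ and the superposition factor, giving (\ref{BKernel}) (the two displayed forms coincide through the relation $\beta=\beta(0)\mu(0)\lambda/\mu$). Integrating (\ref{gamma}) via the differential identity $\gamma_0'=a\beta_0^2$ gives (\ref{CKernel}); variation of parameters for the inhomogeneous $\delta$-equation, with homogeneous part proportional to $\beta$ and base particular solution $\delta_0$, gives (\ref{ap6}); and the two remaining quadratures give (\ref{ap7}) and (\ref{ap8}). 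In parallel I would verify by direct differentiation that the explicit integral representations (\ref{alpha0})--(\ref{kappa0}) of the base functions do solve the system, and I would prove that the two expressions offered for $\gamma_0$ (and the second form of $\beta$) agree; this last identity follows from the Wronskian of $\mu_0,\mu_1$, which by Abel's formula equals $-2\mu_1(0)a(t)\lambda^2(t)$, together with the equation satisfied by $\mu_1$.

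The main obstacle is not any single integration but the consistent bookkeeping of the constants. Concretely, the hard part is to show that the single parameter $\alpha(0)$, entering only through $1/(\alpha(0)+\gamma_0)$, propagates coherently into the formulas for $\beta,\gamma,\delta,\varepsilon,\kappa$ while respecting the normalisations $\delta_0(0)=g_0(0)/(2a(0))$, $\varepsilon_0(0)=-\delta_0(0)$ and $\kappa_0(0)=0$, so that each of (\ref{MKernel})--(\ref{ap8}) indeed attains its prescribed value at $t=0$ despite the base functions being singular there. The Wronskian identity establishing the equivalence of the two forms of $\gamma_0$ is the subtle analytic point; everything else is careful but routine. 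As an alternative that sidesteps the derivation, one could simply substitute (\ref{MKernel})--(\ref{ap8}) into (\ref{Ermakov01})--(\ref{Ermakov06}), confirm they satisfy the system and attain arbitrary initial data, and invoke uniqueness; but the cascade derivation is preferable because it explains \emph{why} the superposition factor $1/(\alpha(0)+\gamma_0)$ appears.
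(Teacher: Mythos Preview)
The paper does not prove this lemma; it merely states ``The following result can be found at \cite{Suazo:Sus:Ve2}'' and records the formulas. Your proposal therefore supplies what the paper omits, and the cascade strategy you outline is correct and is the natural route: with $c_0=0$ the $\alpha$-equation decouples as a scalar Riccati, linearises via the substitution of Lemma~1 to the homogeneous form of (\ref{Ermakov-P}), and the remaining five equations become linear (for $\beta,\delta$) or pure quadratures (for $\gamma,\varepsilon,\kappa$) solved in succession. Your identification of the Abel--Wronskian identity $W[\mu_0,\mu_1]=-2\mu_1(0)\,a(t)\,\lambda^2(t)$ as the key to reconciling the two displayed forms of $\gamma_0$ and to establishing $\gamma_0'=a\beta_0^2$ is exactly right, as is your remark that the bookkeeping at $t=0$ is the delicate point because $\mu_0(0)=0$ makes each base function singular there while the full combinations (\ref{MKernel})--(\ref{ap8}) must stay finite; the asymptotics $\mu_0(t)\sim 2a(0)t$ and $\gamma_0(t)\sim -1/(4a(0)t)$ as $t\to0^+$ are what make this work and would be worth writing out. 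The verification-by-substitution alternative you mention at the end is in fact closer in spirit to what the cited references do (the authors note that the formulas have been checked by computer algebra in \cite{Ko:Su;su} and \cite{Suazo:Sus:Ve2}), so your constructive derivation is more explanatory than anything the paper or its immediate sources provide.
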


The following result can be found at \cite{Suazo:Sus:Ve2}.

\begin{lem}
(Solutions for the Ermakov System.) The system of ordinary differential
equations (\ref{Ermakov01})-(\ref{Ermakov06}) with $c_{0}=1$ has the
following multiparameter general solution: 
\begin{align}
\mu (t)& =\mu _{0}\mu (0)\sqrt{4(\gamma _{0}+\alpha (0))^{2}-\beta ^{4}(0)},
\label{Erma1} \\
\alpha (t)& =\alpha _{0}-\dfrac{\beta _{0}^{2}(\gamma _{0}+\alpha (0))}{%
4(\gamma _{0}+\alpha (0))^{2}-\beta ^{4}(0)},  \label{Erma2} \\
\beta (t)& =\dfrac{\beta (0)\beta _{0}}{\sqrt{4(\gamma _{0}+\alpha
(0))^{2}-\beta ^{4}(0)}},  \label{Erma3} \\
\gamma (t)& =\gamma (0)-\dfrac{1}{4}\ln \left[ \dfrac{(\gamma _{0}+\alpha
(0))+\frac{1}{2}\beta ^{2}(0)}{(\gamma _{0}+\alpha (0))-\frac{1}{2}\beta
^{2}(0)}\right] ,  \label{Erma4} \\
\delta (t)& =\delta _{0}+\beta _{0}\dfrac{\varepsilon (0)\beta
^{3}(0)-2(\gamma _{0}+\alpha (0))(\varepsilon _{0}+\delta (0))}{4(\gamma
_{0}+\alpha (0))^{2}-\beta ^{4}(0)},  \label{Erma5} \\
\varepsilon (t)& =\dfrac{\beta (0)(\delta (0)+\varepsilon _{0})-\varepsilon
(0)(\gamma _{0}+\alpha (0))}{\sqrt{4(\gamma _{0}+\alpha (0))^{2}-\beta
^{4}(0)}},  \label{Erma6} \\
\kappa (t)& =\kappa _{0}+\kappa (0)+\dfrac{\beta ^{3}(0)\varepsilon
(0)(\varepsilon _{0}+\delta (0))}{4(\gamma _{0}+\alpha (0))^{2}-\beta ^{4}(0)%
}  \label{Erma7} \\
& -\dfrac{(\gamma _{0}+\alpha (0))\left[ \beta ^{2}(0)\varepsilon
^{2}(0)+(\varepsilon _{0}+\delta (0))^{2}\right] }{4(\gamma _{0}+\alpha
(0))^{2}-\beta ^{4}(0)}  \notag
\end{align}%
where $\alpha _{0},\beta _{0},\gamma _{0},\delta _{0},\epsilon _{0}$ and $%
\kappa _{0}$ are given explicitly for (\ref{alpha0})-(\ref{kappa0})\ \ with
the arbitrary initial conditions $\mu (0)>0,$ $\alpha (0),$ $\beta (0)\neq
0, $ $\gamma (0),$ $\kappa (0),$ $\delta (0),$ $\varepsilon (0).$
\end{lem}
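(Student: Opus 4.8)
The plan is to reduce the nonlinear Riccati--Ermakov system (\ref{Ermakov01})--(\ref{Ermakov06}) with $c_0=1$ to a single scalar Ermakov--Pinney equation for $\mu$, solve that equation by means of a fundamental pair of the associated linear problem, and then propagate the solution up through $\alpha,\beta,\gamma$ and finally through $\delta,\varepsilon,\kappa$ by quadratures, fixing the free constants at $t=0$. First I would dispose of the block $(\mu,\alpha,\beta,\gamma)$. Applying the standard substitution relating $\alpha$ to $\mu'/\mu$ used in the proof of Lemma~1 (signed so that $\mu=\mu_0$ returns $\alpha_0$) to the Riccati equation (\ref{Ermakov01}), and eliminating $\beta$ via the $\beta$-equation, turns (\ref{Ermakov01}) into the Ermakov equation (\ref{Ermakov-P}) with $c_0=1$; this is precisely the computation in the proof of Lemma~1, except that the term $c_0\beta^4$ is now retained and produces the right-hand side of (\ref{Ermakov-P}), which is proportional to $c_0/\mu^3$. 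Hence $\mu$ must solve an Ermakov--Pinney equation whose homogeneous linear part has the fundamental solutions $\mu_0,\mu_1$ normalized by $\mu_0(0)=0$, $\mu_0'(0)=2a(0)\neq0$, $\mu_1(0)\neq0$, $\mu_1'(0)=0$, exactly as in the appendix.

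Next I would invoke the classical representation of the general solution of such an equation, $\mu=\sqrt{A\mu_0^2+2B\mu_0\mu_1+C\mu_1^2}$, with $A,B,C$ subject to a relation for $AC-B^2$ fixed by the Wronskian $W[\mu_0,\mu_1]$ and by the coefficient of the $c_0/\mu^3$ term. Writing this through the ratio $\mu_1/\mu_0$ — equivalently through $\gamma_0$, which is affine in $\mu_1/\mu_0$ by the appendix formula — and imposing the prescribed values $\mu(0)>0$, $\alpha(0)$, $\beta(0)\neq0$ forces $\mu$ into the closed form (\ref{Erma1}); differentiating and reusing the substitution gives (\ref{Erma2}). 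The equation $\beta'/\beta=c+4a\alpha$ is then a single quadrature once $\alpha$ is known, producing $\beta=\beta(0)\mu(0)\lambda(t)/\mu(t)$, i.e.\ (\ref{Erma3}), and $\gamma'=a\beta^2$ (equation (\ref{gamma})) is a pure quadrature whose integrand, with the explicit $\beta$, is an exact derivative of the expression in (\ref{Erma4}). At this stage every genuinely nonlinear feature of the system has been absorbed into $\mu$.

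With $\alpha$ and $\beta$ now explicit functions of $t$, the equations for $\delta$ and $\varepsilon$ form a linear inhomogeneous $2\times2$ system, since the couplings $2c_0a\beta^3\varepsilon$ and $2a\delta\beta$ have known coefficients. Its homogeneous solutions are the $\mu$-rescalings of the base quantities $\delta_0,\varepsilon_0$ of (\ref{alpha0})--(\ref{kappa0}), and variation of parameters supplies the $\delta(0),\varepsilon(0)$-dependent corrections; matching at $t=0$ with $\varepsilon_0(0)=-\delta_0(0)$ yields (\ref{Erma5})--(\ref{Erma6}). Finally $\kappa$ follows from (\ref{Ermakov06}) by one integration of $-g\delta+a\delta^2+a\varepsilon^2\beta^2$, again arranged as an exact derivative, with $\kappa_0(0)=0$, giving (\ref{Erma7}). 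The quickest fully rigorous alternative is simply to substitute (\ref{Erma1})--(\ref{Erma7}) into (\ref{Ermakov01})--(\ref{Ermakov06}) and check the identities directly; this is mechanical but long, and in any case the result is already contained in \cite{Suazo:Sus:Ve2}.

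The step I expect to be the main obstacle is the Ermakov--Pinney part: verifying that the nonlinear superposition really collapses to the specific denominators $4(\gamma_0+\alpha(0))^2-\beta^4(0)$, that the Wronskian constraint reproduces exactly the $c_0=1$ coefficient appearing in (\ref{Ermakov-P}), and that after reinsertion into the $\alpha$-substitution and the $\gamma$-integration all the prescribed data $\mu(0)>0$, $\alpha(0)$, $\beta(0)\neq0$, $\gamma(0)$, $\delta(0)$, $\varepsilon(0)$, $\kappa(0)$ are met simultaneously. The $\delta,\varepsilon,\kappa$ bookkeeping is more tedious but raises no conceptual difficulty, since by then the remaining system is linear with explicitly known coefficients.
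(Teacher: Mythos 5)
The paper itself offers no proof of this lemma: it is stated as imported verbatim from \cite{Suazo:Sus:Ve2} (and the text adds that the appendix formulas were verified there and in \cite{Ko:Su;su} by computer algebra). So there is nothing in the paper to match your argument against line by line; what can be said is that your strategy is the standard one and is essentially the derivation underlying the cited reference. The decomposition is right: the $(\mu,\alpha,\beta,\gamma)$ block reduces via $\alpha=\tfrac{1}{4a}\mu'/\mu-\tfrac{d}{2a}$ and $\beta=\beta(0)\mu(0)\lambda/\mu$ to the Ermakov--Pinney equation (\ref{Ermakov-P}) with the $c_0=1$ right-hand side, whose general solution is the Pinney nonlinear superposition $\mu=\bigl(A\mu_0^2+2B\mu_0\mu_1+C\mu_1^2\bigr)^{1/2}$ with $AC-B^2$ pinned by the Wronskian; rewriting the quadratic in terms of $\gamma_0$ (affine in $\mu_1/\mu_0$) and imposing the data at $t=0$ gives (\ref{Erma1})--(\ref{Erma2}), after which $\beta$ and $\gamma$ are single quadratures (one checks $\gamma_0'=a\beta_0^2$, so the integrand of $\gamma'=a\beta^2$ is indeed the exact derivative of the logarithm in (\ref{Erma4})), and $(\delta,\varepsilon)$ is a linear inhomogeneous system solved by variation of parameters with $\kappa$ a final quadrature. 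Two points deserve explicit care rather than a parenthetical: first, the sign convention — the appendix defines $\alpha_0=-\tfrac{1}{4a}\mu_0'/\mu_0-\tfrac{d}{2a}$ while the Lemma~1 substitution carries $+\tfrac{1}{4a}\mu'/\mu$, so ``signed so that $\mu=\mu_0$ returns $\alpha_0$'' conceals a genuine bookkeeping step that must be resolved consistently with the initial data $\mu_0(0)=0$, $\mu_0'(0)=2a(0)$; second, the Pinney constraint must be checked to reproduce exactly the coefficient $4a^2\lambda^4$ of $\mu^{-3}$ in (\ref{Ermakov-P}), which is what forces the specific denominator $4(\gamma_0+\alpha(0))^2-\beta^4(0)$. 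You correctly flag both as the main obstacles. Your fallback — direct substitution of (\ref{Erma1})--(\ref{Erma7}) into (\ref{Ermakov01})--(\ref{Ermakov06}) — is legitimate and is in effect how the source reference certifies the formulas.
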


\end{document}